\documentclass{ectj}

\usepackage{amsfonts,amssymb,graphics,epsfig,verbatim,bm,latexsym,amsmath,url,amsbsy}
\graphicspath{{./}}
\DeclareGraphicsExtensions{.png,.eps}

\newtheorem{assumption}{Assumption}
\newtheorem{proposition}{Proposition}

\newtheorem{lemma}{Lemma}
\newtheorem{definition}{Definition}
\newtheorem{observation}{Observation}

\makeatletter
\@addtoreset{theorem}{section}
\@addtoreset{assumption}{section}
\@addtoreset{proposition}{section}
\@addtoreset{corollary}{section}
\@addtoreset{lemma}{section}
\@addtoreset{definition}{section}
\@addtoreset{observation}{section}
\@addtoreset{algorithm}{section}
\@addtoreset{example}{section}
\@addtoreset{remark}{section}
\makeatother

\title[Persistence is not independence]{Persistence is not Independence: a Structural Credibility Protocol for Distribution based Recovery in VARLiNGAM.}

\author[T.~A.~Doan Nguyen and A.~Bucciol]{Tu~Anh~Doan~Nguyen$^{\dagger}$ and Alessandro~Bucciol$^{\dagger}$}

\address{$^{\dagger}$Department of Economics, University of Verona,
            Via Cantarane 24, 37129 Verona, Italy.}
\email{nguyentuanh.doan@studenti.univr.it, alessandro.bucciol@univr.it}

\newcommand{\OAmap}[1]{\ifx#1A B\else\ifx#1B C\else\ifx#1C D\else\ifx#1D E\else\ifx#1E F\else\ifx#1F G\fi\fi\fi\fi\fi\fi}
\newcommand{\OA}[1]{Appendix~\OAmap{#1}}
\newcommand{\MT}[2]{#2}
\makeatletter
\def\@submissionheader{Preprint}
\makeatother

\begin{document}

\begin{abstract}
VAR--LiNGAM recovers a contemporaneous structure from reduced-form innovations under mutual
independence of the underlying structural shocks. Recovery itself and bootstrap
stability of the recovered edges are natural post-estimation outputs, but neither directly
assesses the identifying independence assumption. We organize post-recovery validation as an ordered,
non-compensatory protocol with four layers (reduced-form admissibility, distributional
informativeness, independence support and edge persistence) and establish, by a
constructive population result, a Monte Carlo experiment under known ground truth and a
firm-level Compustat application, that the last two layers are non-nested: an edge can persist
under resampling while the recovered shocks remain dependent, and independence can be supported
while the leading recovered edge fails to persist. Independence support is classified operationally by a
block-permutation dHSIC diagnostic on the recovered shocks, with mutual information as a
continuous measure of residual dependence and tail co-exceedance as a complementary
extreme-dependence diagnostic. Empirically, the conjunction of no detected independence violation and a reportable
persistent leading edge occurs only in a small subset.
\keywords{Structural VAR, LiNGAM, non-Gaussian identification, independence diagnostics, HSIC,
bootstrap stability. JEL: C32, C52, C18.}
\end{abstract}

\section{Introduction}
\label{sec:intro}
{\sloppy Reduced-form VAR innovations do not carry structural meaning on their own; see \citet{sims1980}
and \citet{lutkepohl2005}. Structural VARs supply the missing identifying logic through external
restrictions (\citealp{blanchardquah1989}; \citealp{uhlig2005}; \citealp{kilianlutkepohl2017});
distribution-based identification supplies it instead through the shape of the shock distribution
(\citealp{comon1994}; \citealp{hyvarinen2000}). VAR--LiNGAM is an established causal-discovery
instance of the latter route in economics: under non-Gaussianity, mutual independence of the
structural shocks and an acyclic contemporaneous structure, the contemporaneous decomposition is
identified up to permutation and scaling (\citealp{shimizu2006}; \citealp{hyvarinen2010};
\citealp{moneta2013}).\par}

Identification theory says when a structural object is recoverable under maintained assumptions. This paper
asks what empirical evidence entitles the recovered object to retain its structural interpretation once
recovery has taken place. Because the algorithm returns a contemporaneous matrix, an ordering or a graph directly from the
residuals, identification can look as though it comes for free. Recovery itself and the bootstrap
stability of its edges are the natural post-estimation outputs to report in support of a
recovered object. Neither reaches the assumption that licenses a structural
reading. Within the maintained non-Gaussian acyclic route, \emph{independence} of the underlying structural
shocks is the condition that supplies structural separation; after recovery, the compatibility of the
recovered shocks with independence is the empirical object that can be assessed. Non-Gaussianity keeps
the distribution-based route open, acyclicity restricts admissible form, and the graph is a downstream
representation. Independence support and bootstrap persistence are non-nested evidentiary objects.
Distribution-based recovery does not make structural entitlement automatic.

Existing work provides tests and inference procedures for independence and related distributional
identifying assumptions in ICA and non-Gaussian SVARs, including methods that account for
estimated shocks, serial dependence and weak non-Gaussianity
(\citealp{wu2009}; \citealp{amengual2022}; \citealp{amengual2024}; \citealp{drautzburg2023};
\citealp{herwartz2018}; \citealp{montielolea2022}; \citealp{hoesch2024}), and the LiNGAM
software package reports both error-independence $p$-values and bootstrap edge frequencies
\citep{ikeuchi2023}.
Our contribution is not a new general test of independence. We instead study the post-recovery
relation between independence support and bootstrap persistence, show that the two are
non-nested, and operationalize their joint reporting in short, heavy-tailed VAR--LiNGAM
applications. That both diagnostics are available side by side while, to our knowledge, their
relationship has not been formalized as two non-nested post-recovery evidentiary objects, is the
motivation for the paper.

Our first contribution is to formalize the post-recovery evidentiary problem. We cast the step after the
algorithm returns an output as a \emph{structural credibility protocol}: an ordered, non-compensatory sequence of four
evidentiary layers, namely reduced-form admissibility, distributional informativeness,
independence support and edge persistence, in which independence support is decisive and a
favourable result in one layer cannot compensate for weak support in another; the architecture determines what
structural interpretation a recovered object is entitled to carry. One implication of the protocol is a directional-standing
standard for what a surviving orientation must establish before it is read as directed, with or without structural
entitlement (Section~\ref{sec:standing}).

Our second contribution is to establish that \emph{independence support and bootstrap persistence are
non-nested}. Persistence concerns sampling variability of a recovered feature; independence support concerns the
identifying assumption. We establish it at three levels. As a population statement
it is a constructive existence result: within a class of stationary structural VARs there are
processes in every cell of the support-by-persistence table, because shock dependence and edge
strength are separate features of the data-generating process, one direction obtained under
consistency and bootstrap-regularity conditions for the correctly specified independent-shock
case and the other under an explicit pseudo-true recovery condition
(Appendix~\ref{app:nn}, Proposition~\ref{prop:constr}); for the shared-scale construction used, that pseudo-true
population recovery is derived exactly rather than assumed (Proposition~\ref{prop:sym}). A Monte Carlo experiment
under known ground truth recovers both
off-diagonal cells through the full recovery-and-diagnosis pipeline, with approximately nominal size of the
independence-support diagnostic (Section~\ref{sec:mc}). Empirically, the two dimensions come apart
across sixty admissible Compustat firms. Persistence therefore cannot substitute for independence
support, a point that matters because bootstrap stability is a common robustness check in applied
work. To make independence support a jointly reported quantity alongside persistence, we classify it operationally with a
block-permutation Hilbert--Schmidt Independence Criterion (HSIC) diagnostic
(\citealp{gretton2005}; \citealp{pfister2018}), read alongside recovered-shock mutual information
(MI), a continuous measure of residual dependence, and a tail co-exceedance diagnostic
\citep{coles2001}.

The Monte Carlo also exposes an operational point of independent interest: the magnitude cutoff
used to count an edge as present in a bootstrap resample must not be small relative to bootstrap
sampling noise. With the small cutoff of our baseline implementation, a non-existent edge is declared persistent in
roughly half of the replications at $T=250$; in the independent-shock design, a cutoff of about
two bootstrap standard deviations sharply reduces this false persistence while retaining strong
edges, at the cost of lower sensitivity to weak edges.

We apply the protocol to firm-level Compustat data, one of the original economic application
domains of VAR--LiNGAM \citep{moneta2013}, with the Tennessee Eastman Process (TEP) of
\citet{downsvogel1993} as a long-sample comparative benchmark. Persistence and independence support
diverge across units: of sixty admissible Compustat systems, eleven have a persistent leading
edge without independence support, twenty have independence support without a persistent leading
edge, nine have both and twenty have neither. The
combined admissibility and distributional-informativeness screens retain only $60$ of $11{,}945$
candidate firms, and the
supported-and-persistent set is small, with heterogeneous direction across firms.

The paper is not a critique of VAR--LiNGAM as an identification device; under its assumptions the
identification result is correct, and the protocol is not a replacement identification theorem. Its formal
result concerns the evidentiary relationship among objects observed after a distribution-based identification
procedure has been applied: entitlement after recovery. We treat the
block-permutation HSIC as a serial-aware diagnostic rather than an exact test on recovered shocks
and flag the exact-inference question in Section~\ref{sec:disc}.

\section{Related literature}
\label{sec:lit}
\subsection{Structural VAR identification}
Since \citet{sims1980}, the central difficulty is that reduced-form innovations do not determine a
unique contemporaneous structure: $\Sigma_u=C\Sigma_eC'$ leaves the impact matrix $C$
underidentified. Classical solutions impose external restrictions, such as recursive zeros,
long-run neutrality \citep{blanchardquah1989} or sign restrictions \citep{uhlig2005}; see
\citet{lutkepohl2005} and \citet{kilianlutkepohl2017}.

\subsection{Distribution-based identification}
Independent component analysis shows that a linear mixture of non-Gaussian, mutually independent
sources is identified up to permutation and scaling (\citealp{comon1994}; \citealp{hyvarinen2000}).
\citet{shimizu2006} turn this into LiNGAM for acyclic linear systems, \citet{hyvarinen2010} extend
it to VAR residuals and \citet{moneta2013} bring it into applied economics. That study places
both distributional validation and bootstrap robustness inside the VAR--LiNGAM workflow: its
algorithm explicitly recommends validating non-Gaussianity and ``(at least approximate)
statistical independence'' of the estimated components before proceeding, and its empirical
applications assess robustness of the recovered structure through bootstrap resampling,
subsample comparisons and alternative estimation methods.
Our point is not that these ingredients were absent. We separate them as distinct post-recovery
evidentiary objects and study their relationship directly: independence support concerns
compatibility with the identifying assumption, persistence concerns reproducibility of the
recovered relation under resampling. In this sense the protocol is complementary to the original
VAR--LiNGAM program: it does not alter the identification argument, but organizes the evidence
once a structure has been recovered. The ingredients are established; the contribution lies in
the evidentiary architecture. In parallel, an
econometric literature studies non-Gaussian SVAR identification directly: identification and
(quasi-)maximum likelihood estimation (\citealp{lanne2017}; \citealp{gourieroux2017};
\citealp{fiorentini2023}), GMM based on higher moments (\citealp{lanneluoto2021};
\citealp{keweloh2021}; \citealp{guay2021}) and the survey of \citet{lewis2025}.
Recent work by \citet{lanneliuluoto2026} establishes identification results for non-Gaussian SVARs under
particular forms of shock dependence. Our question is different: holding the VAR--LiNGAM identification route
fixed, we examine the post-recovery evidentiary relationship between support for its independence restriction and
the persistence of recovered relations.
\citet{montielolea2022} question whether higher-moment identification solves the
simultaneous-causality problem; \citet{monetapallante2022} evaluate ICA-based SVAR identification by
simulation.

\subsection{Testing independence of structural shocks}
Closest to our Layer~3 are formal tests of the independence assumption on estimated shocks.
\citet{amengual2022} derive moment tests of independent components and \citet{amengual2024}
specification tests for non-Gaussian SVARs; \citet{herwartz2018} identifies shocks by minimizing a
dependence criterion and \citet{herwartz2019} uses independence to revisit long-run restrictions;
\citet{drautzburg2023} use independence to refine set identification; \citet{mattesontsay2017}
build ICA on distance covariance; \citet{wu2009} propose a smoothed-bootstrap mutual-information
test for strictly stationary multivariate series and apply it to estimated ICA components, and
\citet{amengual2022,amengual2024} account explicitly for the sampling variability induced by
estimating the shocks. \citet{montielolea2022} recommend testing the stronger shock assumptions
that higher-moment identification imposes; \citet{hoesch2024} develop inference that remains
valid when identifying non-Gaussianity is weak, treating the strength of the identifying
distributional assumptions as an empirical question; and \citet{amengual2025} extend
independence-based specification testing to Gaussian SVARs identified by conventional
restrictions. These contributions treat independence as part of estimation or as a
specification test of the structural model. We take recovery as given, measure independence
support on the recovered shocks with a serial-aware reference and, our main point, relate that
support to bootstrap persistence; the non-nestedness of the two is the object of study, not the
independence test itself.

\subsection{Dependence measures, resampling and post-selection inference}
Mutual information is the natural information-theoretic measure \citep{cover2006};
nearest-neighbour estimators are consistent but high-variance in short series. Kernel measures such
as HSIC (\citealp{gretton2005}; \citealp{gretton2008}) and distance covariance \citep{szekely2007}
detect general dependence and admit permutation references; \citet{pfister2018} extend HSIC to
joint independence of several variables. For serially dependent data, kernel independence testing
has its own literature: \citet{chwialkowski2014} show that i.i.d.-style resampling is invalid for
random processes and develop corrected procedures, \citet{wang2021} test independence between
stationary multivariate series with a validated residual bootstrap, \citet{liu2023} extend dHSIC
to joint independence of stationary and non-stationary series, and \citet{shen2024} establish
asymptotic validity of block-permutation independence testing for stationary temporal data. Our
implementation uses block permutation in the spirit of the block bootstrap
(\citealp{kunsch1989}; \citealp{politisromano1994}) and of these temporal extensions; its formal
validity on recovered shocks is a separate question (Section~\ref{sec:disc}). Dependence
concentrated in the joint tails is the domain of extreme-value dependence, where the co-exceedance
coefficient is the standard summary (\citealp{coles2001}; \citealp{embrechts2002}). Finally,
\citet{leeb2005}, \citet{berk2013} and \citet{lee2016} show that treating a data-driven choice as
fixed invalidates the usual guarantees. Distribution-based recovery poses a similar problem one
level up: the object carried forward is an entire contemporaneous structure selected by the shape
of the estimated shock distribution. We do not provide exact post-recovery inference; we place the
question within this tradition.

\section{Framework: recovery, persistence and independence support}
\label{sec:fw}
Let $y_t$ be $n$-dimensional with reduced-form VAR($p$) $y_t=A_1y_{t-1}+\cdots+A_py_{t-p}+u_t$.
VAR--LiNGAM decomposes the innovation as $u_t=Bu_t+e_t$, i.e.\ $(I-B)u_t=e_t$, with $B$ the
contemporaneous coefficient matrix and $e_t$ the structural shocks; equivalently $u_t=Ce_t$ and
$e_t=Wu_t$. Second moments $\Sigma_u=C\Sigma_eC'$ do not pin down $C$. Under non-Gaussian, mutually
independent $e_{it}$ with acyclic $B$, $C$ is identified up to permutation and scaling
(\citealp{comon1994}; \citealp{shimizu2006}; \citealp{hyvarinen2010}). Identification is therefore
\emph{conditional}: it holds if the assumptions hold. The three assumptions do different work.
Non-Gaussianity provides information beyond covariance but does not make components structural;
acyclicity restricts admissible form; independence provides the structural separation. Favourable
evidence on non-Gaussianity or acyclic form cannot compensate for weak independence support, and a
recovered graph is a representation of the output, not evidence for it.

\subsection{Three objects}
\begin{definition}[Recovery]
A decomposition is recovered if the algorithm returns $\hat B$, an ordering or a graph on an
admissible innovation layer.
\end{definition}
\begin{definition}[Persistence]
\label{def:pers}
A recovered edge is persistent if, across moving-block-bootstrap resamples, it recurs, with
absolute recovered coefficient at least $\eta$ and with the sign of the point estimate, at rates
(stability, sign consistency) exceeding thresholds $(\underline s,\underline c)$.
\end{definition}
\begin{definition}[Independence support]\sloppy
A recovered decomposition has independence support if the recovered shocks $\hat e_t$ are
compatible with mutual independence under the designated serial-aware diagnostic.
\end{definition}
Recovery is guaranteed by construction on any admissible input. Persistence concerns sampling
variability. Independence support concerns the identifying assumption.
\begin{observation}[Non-nestedness]
\label{prop:nn}
\textnormal{The set of recovered systems with a persistent leading edge and the set with
independence support are non-nested: neither contains the other. Appendix~\ref{app:nn} constructs
stationary structural VAR processes in each cell of the support-by-persistence table
(Proposition~\ref{prop:constr}), one direction from standard consistency conditions alone and the
other under an explicit pseudo-true recovery condition (Assumption~\ref{a:pseudo});
Section~\ref{sec:mc} recovers the cells under known ground truth;
Section~\ref{sec:nonnest} documents them across sixty Compustat systems.}
\end{observation}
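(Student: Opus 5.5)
Non-nestedness amounts to exhibiting two off-diagonal witnesses in the support-by-persistence table. We need a stationary structural VAR that, with probability tending to one, has independence support but no persistent leading edge. We also need one that has a persistent leading edge but no independence support. The plan is to build both inside the same bivariate class
\[
y_t=A_1y_{t-1}+u_t,\qquad u_t=Bu_t+e_t,\qquad B=\begin{pmatrix}0&0\\ b&0\end{pmatrix},
\]
with $A_1$ having spectral radius below one, so that stationarity is immediate. Two features are then set separately: the edge strength $b$ and the dependence structure of $e_t$. This separation is the point of the paper's remark that shock dependence and edge strength are distinct features of the data-generating process.

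\emph{Cell (support, no persistence).} Take $e_{1t},e_{2t}$ i.i.d., mutually independent and non-Gaussian, for instance centred Laplace, so the VAR--LiNGAM assumptions hold exactly. Set $b=0$, or $b$ with $|b|<\eta$.
\begin{itemize}
\item By the consistency of VAR--LiNGAM under its assumptions (\citealp{shimizu2006}; \citealp{hyvarinen2010}), $\hat B\to B$ and $\hat e_t$ converges to $e_t$ up to permutation and scaling.
\item Under a standard bootstrap-regularity condition for the moving-block bootstrap, the resampled coefficient $\hat b^*$ concentrates around $b$. Its exceedance rate of $\eta$ then tends to zero, so the leading edge falls below the stability threshold $\underline s$ and is not persistent.
\item Independence support holds provided the serial-aware dHSIC diagnostic has asymptotic size at most nominal on recovered shocks that converge to independent ones.
\end{itemize}
I would state this size property as a maintained condition, because the paper explicitly treats it as a diagnostic rather than an exact test. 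This is the "standard consistency" direction.

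\emph{Cell (persistence, no support).} Keep the same $A_1$ and take $|b|$ well above $\eta$. Generate dependent shocks through a shared scale: $e_{it}=\sigma_t\varepsilon_{it}$, with $\varepsilon_{it}$ independent and symmetric, and $\sigma_t$ a common positive i.i.d.\ factor.
\begin{itemize}
\item The $e_{it}$ are then uncorrelated but not independent, since $\mathrm{Cov}(e_{1t}^2,e_{2t}^2)=\mathrm{Var}(\sigma_t^2)E\varepsilon_1^2E\varepsilon_2^2>0$.
\item Because the model is misspecified, recovery must be analysed at a pseudo-true value. This is Assumption~\ref{a:pseudo}: the population limit of the algorithm has $\bar B$ with $|\bar b|>\eta$ of fixed sign, and the pseudo-true shocks $\bar e_t$ are dependent.
\item For the shared-scale design I would argue this holds exactly. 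By the sign symmetry of $(\varepsilon_1,\varepsilon_2)$, the population contrast used by the LiNGAM ordering and regression step is minimized at the true ordering. The OLS coefficient on the whitened innovations then equals $b$ exactly, since the $e_{it}$ are uncorrelated.
\item So $\bar e_t=e_t$, which is dependent. This is the content of Proposition~\ref{prop:sym}, which may be invoked.
\end{itemize}
Given this, consistency toward $\bar B$ together with bootstrap concentration yields stability and sign-consistency rates tending to one, so the edge is persistent. Separately, dHSIC is consistent against any fixed dependent alternative; block permutation destroys the cross-sectional link while preserving the serial structure. Independence support is therefore rejected with probability tending to one.

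The two diagonal cells follow by combining the ingredients: independent shocks with large $b$, and shared-scale shocks with $b=0$. Both cells are filled, so neither set contains the other, which proves the observation at the population level. The Monte Carlo and Compustat evidence referenced in Observation~\ref{prop:nn} is corroborative rather than part of the proof.

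The main obstacle is the second off-diagonal cell, where the pseudo-true recovery must actually deliver $|\bar b|>\eta$ under dependent shocks. LiNGAM's ordering step relies on independence-based measures, which could in principle misorder the variables when independence fails. My first attempt would use the symmetry argument: the model is invariant under $\varepsilon_i\mapsto-\varepsilon_i$, and the shared scale leaves the kurtosis-type contrast asymmetric in favour of the true causal direction. If that fails for a given ordering criterion, I would fall back on stating Assumption~\ref{a:pseudo} explicitly, which is what the statement allows.
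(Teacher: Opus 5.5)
\paragraph{Overall.} Your architecture matches the paper's. You use the same bivariate class and the same shared-scale construction, consistency plus bootstrap regularity for the independent-shock row, and a pseudo-true recovery condition for the dependent row. The independent-shock witness is fine, and your fixed cutoff with $|b|<\eta$ even dispenses with the rate conditions of Lemma~\ref{lem:weakedge}.

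\paragraph{The gap: the support axis of the persistent-but-unsupported cell.} You obtain dependence of the recovered shocks only through $\bar e_t=e_t$, and the justification you give for that is not valid.

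Sign symmetry, together with the $\pi/2$-periodicity of the contrast, makes $\theta=0$ and $\theta=\pi/4$ critical points, but it does not rank them. Take $E\varepsilon_i^2=E\sigma_t^2=1$ and $E\varepsilon_i^4=m$. The fourth cumulant of $\cos\theta\,e_1+\sin\theta\,e_2$ is then $E\sigma_t^4\{m+(6-2m)\cos^2\theta\sin^2\theta\}-3$. For symmetric uniform $\varepsilon_i$ ($m=1.8$) and $E\sigma_t^4>3/m$, this is positive everywhere and largest at $\theta=\pi/4$. A kurtosis-type contrast then selects the $\pi/4$ rotation, so $\bar e_t\neq e_t$, and at $b=0$ row normalization yields spurious edges of modulus one.

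Proposition~\ref{prop:sym} does not follow from symmetry. It uses two specific facts: the $t_3$ sources are Gaussian scale mixtures, and $v\mapsto E\log\cosh(\sqrt v\,\nu)$ is strictly concave. It also concerns the ICA--LiNGAM rotation pipeline, not the ordering-and-regression step you describe. You may invoke it only after specializing to that construction and pipeline; it does not cover general symmetric $\varepsilon_{it}$ and a general common factor $\sigma_t$.

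\paragraph{Your fallback does not close the gap.} The version of Assumption~\ref{a:pseudo} you would state contains ``the pseudo-true shocks $\bar e_t$ are dependent.'' That is the support-axis conclusion itself. The paper's Assumption~\ref{a:pseudo} supplies, on that axis, only non-singularity of the limit.

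The missing idea is Lemma~\ref{lem:scale}. Suppose $e_t=V_tu_t$ with the $u_{jt}$ independent Gaussian scale mixtures and $V_t$ a shared positive scale. Then every non-singular $M$ gives $\operatorname{Cov}(|(Me_t)_i|,|(Me_t)_k|)\ge\operatorname{Var}(V_t)\,E|w_{it}|\,E|w_{kt}|>0$, with $w_t=Mu_t$. The proof applies Nabeya's absolute-moment formula conditional on the mixing variances, then Harris's association inequality. Dependence therefore survives whatever non-singular pseudo-true limit the recovery reaches, and this is what lets the paper confine Assumption~\ref{a:pseudo} to the persistence axis.

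Your covariance-of-squares argument does not extend to arbitrary non-singular recoveries. For the $\pi/4$ rotation of uniform sources, the recovered-shock covariance of squares is $E\sigma_t^4(m-1)/2-1$, which vanishes at $E\sigma_t^4=2.5$.

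\paragraph{Two smaller points.} First, you state the goal as independence support ``with probability tending to one.'' Under independent shocks, the classifier's non-rejection probability tends to $1-\alpha$, not one. The witnesses must therefore be defined at the population level, as the paper does with $S_0$ and $P_0$, with the classifier limits stated separately. Second, consistency of the block-permutation dHSIC on recovered shocks is an assumption (Assumption~\ref{a:test}), not an established fact.
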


\subsection{The structural credibility protocol}
\emph{Maintained model conditions.} The protocol evaluates structural credibility conditional on the
maintained VAR--LiNGAM model class: linearity, causal sufficiency (no hidden common causes) and an
acyclic contemporaneous representation. Reduced-form dynamic adequacy is not maintained but
assessed, imperfectly, by Layer~1 through lag choice, residual autocorrelation and stationarity
diagnostics. The four layers assess whether
the recovered object receives empirical support \emph{within} that maintained class; they do not
validate the maintained class itself, and ``structural credibility'' is to be read throughout as
structural credibility within the maintained model class.

The support required for a structural reading is organized as an ordered, non-compen\-satory
protocol with four layers (Table~\ref{tab:protocol}). The layers operate at two levels.
Layers~1--3 are cumulative conditions on the system: they determine whether the recovered
decomposition is admissible, distributionally informative and compatible with the identifying
assumption. Layer~4 operates on individual recovered relations: it filters which edges are
reproducible enough to report. A failure at Layer~4 therefore reduces what may be reported, not
the admissibility of the system, whereas a failure at Layer~3 undermines the structural reading
itself; a favourable result in one layer cannot compensate for weak support in another, and
persistence is a robustness filter, not a substitute for independence support.

\begin{table}[t]
\caption{\label{tab:protocol}The four-layer structural credibility protocol.}
\begin{center}
\small
\begin{tabular*}{\textwidth}{@{}l@{\extracolsep{\fill}}p{0.20\textwidth}p{0.32\textwidth}p{0.34\textwidth}@{}}
\hline\hline
 & Layer & Diagnostic object & Interpretive role \\
\hline
1 & Reduced-form admissibility & VAR lag, residual autocorrelation, stationarity & Entry condition for applying LiNGAM to the residual layer \\
2 & Distributional informativeness & Non-Gaussianity tests on residuals & Opens the distribution-based route; does not warrant interpretation \\
3 & Independence support & dHSIC classifier; MI severity measure; tail reinforcement diagnostic & Core evidentiary basis for structural credibility \\
4 & Edge persistence & Moving-block bootstrap stability and sign consistency & Robustness filter for recovered relations \\
\hline\hline
\end{tabular*}
\end{center}
\end{table}

\subsection{Directional standing: what a surviving orientation must establish}
\label{sec:standing}
The protocol says where the structural reading stops. It would be of limited use if it said only that. The
protocol does not rescue $B$ after independence support weakens; it changes the question asked of $B$: not
whether the recovered orientation is causal, but whether its direction carries systematic information that
simpler or reversed representations do not. We therefore separate three things that the layers of
Table~\ref{tab:protocol} might otherwise be read as deciding together. Layers~1--2 determine whether a recovered
directional object is admissible for assessment at all. Layer~3 determines whether it retains structural
entitlement within the maintained model class. Directional standing is a separate burden on the surviving
orientation, assessed on the edges that Layer~4 allows to be reported, and it is assessed whether or not
Layer~3 passes.
\begin{definition}[Directional standing]
\label{def:standing}
On a specification that passes Layers~1--2, a recovered orientation has directional standing if (i) it carries
systematic information that the reverse orientation of the same pairs does not, (ii) it adds information
relative to representations computable from the same data without the identifying assumptions---undirected or
symmetric measures, and reduced-form directional measures---and (iii) it is not an artefact of one recovery
algorithm, each of (i)--(iii) being assessed by a stated statistic against a stated null or reference.
\end{definition}
The operational forms are, respectively, a directional statistic (for example a shock-conditioned
tail-dependence lift) computed forward over the estimated edges and compared with the same pairs reversed and
with matched non-links, the forward-minus-reverse gap being tested against a permutation null with its
attainable floor reported, preferably out of sample from pre-period estimates alone; rank and partial
correlations, given size and given the first principal component, between the directed scores and each
benchmark, a low correlation establishing distinctness and not added information, which requires a task in
which the directed scores improve on the benchmark; and re-estimation under a defensible alternative recovery
algorithm of the same admissibility class, with edge-set overlap and ranking agreement reported and each
route's Layer-3 verdict stated---a change of lag order or pre-whitening within the same algorithm is a
sensitivity check, not an alternative route.

Three readings follow, and only three. If Layer~3 passes and standing holds, the orientation may be read as
directed structure within the maintained model class. If Layer~3 fails and standing holds, the orientation may
be read as a directed transmission pattern---a non-causal reading with directional empirical content---but
not as structure: the Layer-3 failure withdraws structural entitlement, and standing does not restore it. If
standing fails, the recovered network is a description of the output, whatever Layer~3 says. Failing
(i)--(iii) therefore does not invalidate the recovered network as a description; it withdraws the directed
reading, exactly as a Layer-3 failure withdraws the structural one. A paper that makes a directed claim states
which of the three assessments were run, with which thresholds, which failed, and which of the three readings
it claims. The standard is deliberately one of evidence rather than a test: what it requires is that the
burden carried by the surviving orientation be visible.

\section{Assessing independence support}
\label{sec:method}
Layer~3 is decisive for the identification route; the founding study's algorithm already calls
for validating the independence of the estimated components (Section~\ref{sec:lit}), and the
protocol makes that evidentiary role explicit and separates it from robustness. The recovered shocks
$\hat e_t=\hat Wu_t$ are the object on which independence can be assessed. We report three
measures on them, with roles assigned in Section~\ref{sec:read}: the block-permutation dHSIC of
Section~\ref{sec:hsic} is the operational classifier, MI a continuous descriptive measure and tail
co-exceedance a reinforcement diagnostic. The descriptive measure is pairwise recovered-shock MI
among the $\hat e_{it}$, estimated by the $k$-nearest-neighbour estimator of \citet{kraskov2004}
with $k=5$. MI is sensitive to general dependence in population, but its nearest-neighbour
estimator is high-variance in short series and can have limited power against dependence
concentrated in sparsely observed tails; the covariance whitening that precedes ICA is insensitive
to such higher-order dependence altogether. The two measures that follow address these
limitations.

\subsection{Block-permutation HSIC}
\label{sec:hsic}
For a pair $(\hat e_i,\hat e_j)$ with Gaussian-kernel Gram matrices $K,L$ (median-heuristic
bandwidth) and $H=I-T^{-1}\bm 1\bm 1'$, $\widehat{\mathrm{HSIC}}=T^{-2}\,\mathrm{tr}(KHLH)$, which
has population value zero if and only if the variables are independent for characteristic kernels
\citep{gretton2005}. For the system we use the $d$-variable joint statistic
$\widehat{\mathrm{dHSIC}}$ of \citet{pfister2018} over all recovered shocks. Inference is by
\emph{block} permutation: the first series is held fixed and each remaining series is shuffled in
contiguous blocks of length $\ell$, preserving within-block serial dependence while destroying
cross-series dependence under the null (\citealp{kunsch1989}; \citealp{politisromano1994}). The
$p$-value uses the standard finite-permutation correction,
$\hat p=\bigl(1+\sum_{b=1}^{B}\mathbf 1\{T^*_b\ge T_{\mathrm{obs}}\}\bigr)/(B+1)$ for $B$
permutations, so the smallest attainable value is $1/(B+1)$. A free permutation destroys the serial dependence of the recovered shocks and
can understate null variability: on the $14$-variable TEP shocks the free-permutation null standard deviation is
about thirty times smaller than the block one (\OA{B}). Block permutation mitigates
this distortion but is not an exact test on recovered shocks, which carry dynamic-estimation and
structural-recovery error; we quantify its finite-sample size and power by simulation in
Section~\ref{sec:mc}.

\subsection{Tail co-exceedance}
For a high quantile $q$, $\widehat\chi_{ij}(q)=\widehat{\Pr}(\hat e_j>Q_j(q)\mid \hat
e_i>Q_i(q))$, with the lower tail symmetric. Under independence $\widehat\chi_{ij}(q)\to1-q$;
larger values signal dependence surviving in the extremes (\citealp{coles2001};
\citealp{embrechts2002}). We report the maximum pairwise co-exceedance with a block-bootstrap
band. \OA{A} shows by simulation that in series of $n\approx40$ the tail diagnostic
registers tail-aligned dependence that HSIC misses in about a quarter of replications.

\subsection{Reading them together}
\label{sec:read}
The three measures play different roles. The block-permutation dHSIC is the operational
classifier of independence support: a system is declared supported when it does not reject at
level $\alpha$. Throughout, ``independence support'' denotes compatibility with independence
under the stated diagnostic at the chosen level; it does not mean that independence has been
established. MI is a continuous descriptive measure of the severity of the remaining dependence,
used to rank and to compare raw VAR residuals with recovered shocks, not to classify. Tail co-exceedance is a
reinforcement diagnostic aimed at a specific failure mode (dependence confined to the tails) and is
reported alongside the classifier, not folded into it. Agreement across the three is stronger
diagnostic support, not a formal decision rule.

\emph{Large systems.} The joint dHSIC is the protocol's classifier. When an application does not compute a
joint product-kernel statistic over dozens of recovered shocks, the protocol admits a pairwise family, every
pair of recovered shocks tested under the same serial-aware null, with false-discovery control over the full
family, as the operational classifier; the application states the difference in evidential scope, and the
choice and calibration of the diagnostic remain its own. The verdict carries a different label: when the procedure yields no
rejection at its level, the system shows \emph{no detected pairwise violation}, or has \emph{pairwise
independence support}; it is not given the status of support under the joint classifier, because pairwise
independence is necessary but not sufficient for the joint independence that identification requires. Three
reading rules attach. The null must remain serial-aware, because free permutation understates null
variability on recovered shocks (\OA{B}); circular time shifts beyond the serial-dependence horizon of the
squared shocks are the natural analogue of block permutation when the family is large. Decisions are read
jointly with severity measures of effect-size type, such as the average correlation between squared recovered
shocks, which do not grow with the sample; at large $T$ consistent pairwise tests detect small departures,
so rejection shares alone are not informative. And a global combination $p$-value over the family,
or the count under a more conservative correction, is a decision-level summary at another level of
conservativeness, not a severity measure, because it depends on the sample size.

\emph{Every network that is read.} The layers apply to the specification that is interpreted, not to a
baseline alone. A training-window network, a sub-sample network, an alternative-route estimate or an
unfiltered variant carries its own Layer-3 verdict on its own recovered shocks; a variant whose verdict fails
may be compared with the baseline as a sensitivity but is not read as structure. Under the non-compensatory
rule this is not an additional requirement; it is the same requirement applied to each object that is read.

\section{Data and empirical design}
\label{sec:data}
\subsection{Domains}
Compustat firm-level data are one of the original application domains of VAR--LiNGAM in applied
economics \citep{moneta2013}; our extract and variable set follow that tradition but are not a
re-analysis of any published study. The extract spans 1970--2024 (SIC 2000--3999); after cleaning, a redundancy
filter and firm-year aggregation, four variables remain: gross profit (gp), revenue (revt), R\&D
(xrd) and employment (emp). The unit is the firm-level series. TEP is a comparative benchmark
\citep{downsvogel1993}: fault-free data from a single simulation run, filtered to a $14$-variable
system on $499$ differenced observations. It is a methodological benchmark, not a second
application: a long regular series in which the sampling behaviour of the diagnostics can be seen
without the short-sample noise that dominates Compustat. The $14$-variable specification contains a
near-collinear pair of innovations, $\texttt{xmeas\_10}$ and $\texttt{xmv\_6}$, which motivated the
sensitivity check reported in Section~\ref{sec:res} and \OA{F}.

\subsection{Pipeline}
Layer 1: first differencing; a reduced-form VAR (lag 2 fixed for Compustat, for comparability
across short series; AIC lag for TEP); Ljung--Box residual autocorrelation, ADF and KPSS
stationarity. Layer 2: non-Gaussianity screening (D'Agostino--Pearson and Jarque--Bera) of the
residual layer; identification requires only that at most one structural shock be Gaussian, so
this screen, which requires every residual series to reject Gaussianity, is a conservative
operational informativeness filter on reduced-form residuals rather than a necessary
identification condition. Layer 3: LiNGAM on the VAR residuals (DirectLiNGAM for Compustat, ICA--LiNGAM for
TEP; \citealp{shimizu2011}; \citealp{hyvarinen2010}) recovers $\hat B,\hat e_t$. For Compustat,
the observed series are standardized to unit variance before VAR estimation; for TEP, the VAR
residuals are standardized to unit variance before ICA. Recovered contemporaneous coefficients
are therefore expressed on standardized scales rather than in heterogeneous original units; for
TEP they attach to unit-variance residuals, so magnitudes are directly comparable across edges.
Independence
support is assessed by MI, block-permutation dHSIC ($\ell=4$ for the annual Compustat series,
$\ell=10$ for TEP; $499$ and $999$ permutations) and tail co-exceedance ($q=0.90$ for Compustat
because $q=0.95$ leaves two or three exceedances on $n\approx45$; $q=0.95$ for TEP). Layer 4:
persistence by moving-block bootstrap ($500$ resamples, block length $10$), recording stability
(share of resamples in which the edge is retained with $|\hat b|\ge\eta$) and sign consistency
(share of those with the sign of the point estimate);\footnote{In the stored Compustat and TEP tables the sign share was
computed as the share of the dominant sign among the retained resamples. Recomputed with the sign of the point
estimate from the saved bootstrap output, no persistence label changes at $(0.70,0.80)$ in either application:
every lag-2 Compustat value is unchanged, and one lag-1 firm and four TEP edges change in value without crossing
the threshold. The Monte Carlo uses the point-estimate sign throughout.} an edge is persistent if stability $\ge0.70$
and sign consistency $\ge0.80$. We use $\eta=0.02$ as a permissive baseline cutoff in the
empirical work;
because every Compustat leading edge has $|\hat b|\ge0.47$ and mean absolute bootstrap
coefficient $\ge0.50$, the cutoff is not binding for the leading-edge classification below:
recomputing stability from the saved bootstrap coefficients, the set of persistent leading edges is
identical for every $\eta\in[0.02,0.20]$ (Section~\ref{sec:survive}). Section~\ref{sec:mc} shows
that for weak edges the cutoff should instead be calibrated to bootstrap noise.

\subsection{Admissibility}
Of $11{,}945$ candidate Compustat firm IDs, $60$ pass the combined admissibility and
distribu\-tional-infor\-mativeness screens of Layers~1 and~2 ($0.50\%$). The
dominant exclusion is insufficient non-zero observations after transformation. The accepted units
are not neutral: their median post-differencing length is $46.5$ observations ($44.5$ after the two
VAR lags) versus $4$ for rejected units, with larger scale and complete R\&D reporting,
concentrated in data-rich industrial SIC groups ($36$, $28$, $38$, $35$, $37$). Compustat results
describe this admissible subset, a documented boundary of the method. Because the exercise is a
collection of firm-level systems, we do not summarize it by an aggregate causal graph, which would
invite a structural reading at a level the protocol has not established.

\section{Monte Carlo evidence under known ground truth}
\label{sec:mc}
Before the data, we ask whether the protocol's two focal post-recovery layers behave as claimed when the
truth is known. The design is the bivariate stationary SVAR of Proposition~\ref{prop:constr},
$x_t=\phi x_{t-1}+e^x_t$, $y_t=b\,x_t+\phi y_{t-1}+e^y_t$, with $\phi=0.3$ and standardized $t_3$
shocks. Independence support is governed by a common stochastic-volatility loading $\lambda$
($\lambda=0$: independent shocks; $\lambda>0$: uncorrelated but dependent shocks), persistence by
the edge strength $b$. Each replication runs the full recovery-and-diagnosis pipeline: OLS VAR(1),
ICA--LiNGAM, block-permutation dHSIC ($499$ permutations, $\ell=\lfloor\sqrt T\rfloor$) with
support declared at $p>0.10$, and moving-block-bootstrap persistence ($200$ resamples, thresholds
$(0.70,0.80)$). The simulation validates the protocol's logic in this bivariate ICA--LiNGAM
implementation; it does not reproduce the four-variable DirectLiNGAM estimator or the admission
screens of the Compustat design, so the empirical dHSIC $p$-values of
Section~\ref{sec:nonnest} are additionally conditional on admission.
Table~\ref{tab:mc-main} summarizes; \OA{C} gives full tables.

\begin{table}[t]
\caption{\label{tab:mc-main}Monte Carlo summary.}
\begin{center}
\small
Panel A: size and power of the support diagnostic (rejection rate at the $10\%$ level)\\[2pt]
\begin{tabular*}{0.8\textwidth}{@{}l@{\extracolsep{\fill}}ccc@{}}
\hline\hline
 & $T=50$ & $T=100$ & $T=250$ \\
\hline
Size ($\lambda=0$)   & 0.069 & 0.068 & 0.105 \\
Power ($\lambda=1$)  & 0.181 & 0.367 & 0.652 \\
Power ($\lambda=2$)  & 0.680 & 0.945 & 1.000 \\
\hline\hline
\end{tabular*}

\medskip
Panel B: persistence tracks edge strength, support does not ($T=250$, independent shocks)\\[2pt]
\begin{tabular*}{0.8\textwidth}{@{}l@{\extracolsep{\fill}}cccccc@{}}
\hline\hline
True edge $b$ & $0$ & $0.02$ & $0.05$ & $0.10$ & $0.20$ & $0.50$ \\
\hline
$\Pr(\widehat P=1)$, $\eta=0.02$ & 0.466 & 0.512 & 0.592 & 0.798 & 0.918 & 0.984 \\
$\Pr(\widehat P=1)$, $\eta=0.20$ & 0.008 & 0.006 & 0.010 & 0.028 & 0.328 & 0.980 \\
$\Pr(\widehat S=1)$              & 0.912 & 0.910 & 0.920 & 0.906 & 0.896 & 0.918 \\
\hline\hline
\end{tabular*}

\medskip
Panel C: protocol assignment, $\eta=0.20$, $T=250$ (rows: designed cell; columns: assigned cell)\\[2pt]
\begin{tabular*}{0.8\textwidth}{@{}l@{\extracolsep{\fill}}cccc@{}}
\hline\hline
Designed cell & S\,\&\,P & S\,\&\,$\bar{\mathrm P}$ & $\bar{\mathrm S}$\,\&\,P & $\bar{\mathrm S}$\,\&\,$\bar{\mathrm P}$ \\
\hline
S\,\&\,P ($\lambda=0$, $b=0.9$)                          & {\bf 445} & 3 & 48 & 4 \\
S\,\&\,$\bar{\mathrm P}$ ($\lambda=0$, $b=0$)             & 3 & {\bf 439} & 0 & 58 \\
$\bar{\mathrm S}$\,\&\,P ($\lambda=2$, $b=0.9$)           & 0 & 0 & {\bf 420} & 80 \\
$\bar{\mathrm S}$\,\&\,$\bar{\mathrm P}$ ($\lambda=2$, $b=0$) & 0 & 0 & 78 & {\bf 422} \\
\hline\hline
\end{tabular*}
\end{center}
\footnotesize
\renewcommand{\baselineskip}{11pt}
\textbf{Note:} Bivariate SVAR(1) with $\phi=0.3$ and standardized $t_3$ shocks; $\lambda$ is the
common stochastic-volatility loading, $b$ the true instantaneous edge. Full recovery-and-diagnosis pipeline in every
replication (OLS VAR(1), ICA--LiNGAM, block-permutation dHSIC with $499$ permutations and
$\ell=\lfloor\sqrt T\rfloor$, moving-block bootstrap with $200$ resamples). Panel A: $1{,}000$
replications per cell, $b=0.9$, Monte Carlo standard error at most $0.016$. Panels B and C: $500$
replications per column (B) or row (C), Monte Carlo standard error at most $0.022$. $\widehat P$:
leading edge declared persistent; $\widehat S$: independence support declared ($p>0.10$). S: support;
P: persistent; a bar denotes absence.
\end{table}

Three lessons follow. First (Panel~A), the block-permutation dHSIC diagnostic on recovered shocks
has empirical size close to or below nominal at every $T$, and its power against shared
stochastic volatility rises from moderate at $T=50$, the Compustat regime, to essentially one at
$T=250$. Non-rejection at $n\approx45$ is therefore weak positive evidence, and we read the
Compustat support classification with that in mind. Second (Panel~B), independence support is
flat in $b$ at about $1-\alpha$ while persistence rises with $b$: the two axes track different
features of the process. Panel~B also shows the cutoff problem: with $\eta=0.02$ a non-existent
edge is declared persistent in $47\%$ of replications, because the leading recovered coefficient
under $b=0$ is pure sampling noise with bootstrap standard deviation about $0.11$, far above the
cutoff. Setting $\eta=0.20$, about two bootstrap standard deviations, reduces this to under $1\%$
while leaving a genuine edge of $b=0.5$ persistent $98\%$ of the time. Third (Panel~C), with the
calibrated cutoff the protocol assigns both off-diagonal cells, the content of non-nestedness,
correctly in $88\%$ and $84\%$ of replications; the support axis is correctly classified in
$94.5\%$ and the persistence axis in $91.6\%$ of all replications. The residual errors concentrate on the persistence axis of the dependent-shock row and run in both
directions, $78$ and $80$ of $500$ replications crossing between the two dependent-shock cells of
Panel~C: the common-volatility factor amplifies tail dispersion and raises bootstrap noise, which
both declares some non-existent edges persistent and destabilizes some true ones. This lowers
four-cell classification accuracy but does not overturn either off-diagonal cell.

The designs above draw shocks that are independent over time, so they do not stress the serial
dependence that motivates the block permutation. A fourth design therefore puts persistence into
the shocks themselves through persistent log-volatilities, independent across shocks under the
null and common under the alternative (\OA{C}). Under this serially dependent null the
block-permutation diagnostic keeps approximately nominal size at the sample lengths relevant for
the application ($0.088$ at $T=50$ and $0.102$ at $T=100$) and drifts to $0.134$ at $T=250$,
while power against the common-volatility alternative rises from $0.158$ to $0.652$; the free
permutation behaves similarly in this design ($0.092$ to $0.138$ under the null), in contrast
with the recovered TEP shocks, where the free-permutation reference is substantially narrower
than the block-permutation reference (\OA{B}). The residual size drift at larger $T$ is consistent with the caveat that the
diagnostic is not an exact test on recovered shocks (Section~\ref{sec:disc}).

\section{Empirical results}
\label{sec:res}
\subsection{TEP: coefficient magnitude and persistence}
In TEP the two largest recovered contemporaneous coefficients, $\texttt{xmeas\_10}\to
\texttt{xmeas\_36}$ ($|\hat b|=3.31$) and $\texttt{xmv\_6}\to\texttt{xmeas\_36}$ ($3.28$), have
bootstrap stability $0.18$ and $0.02$ (Table~\ref{tab:tep}). Three edges pass the persistence rule:
$\texttt{xmeas\_29}\to\texttt{xmeas\_35}$ (stability $0.80$, absolute coefficient $0.20$),
$\texttt{xmeas\_27}\to\texttt{xmeas\_36}$ ($0.77$, $0.20$), and $\texttt{xmeas\_35}\to\texttt{xmeas\_36}$
($0.72$, $0.18$).

The two large coefficients involve the near-collinear innovation pair $\texttt{xmeas\_10}$ and
$\texttt{xmv\_6}$. When either variable is omitted and the VAR is re-estimated, the corresponding edge
between the retained variables is absent from the point fit.
In both $13$-variable ablations, the joint dHSIC diagnostic still rejects independence, and the same three
edges pass the stated persistence rule under the seed-0 pipeline. Point-fit edge sets and some orientations
also vary across the five ICA initialisations examined. Persistence and independence diagnostics were not
recomputed for each initialisation. Figure~\ref{fig:tepgraph} shows the seed-0 point fit of the
$14$-variable specification before bootstrap filtering; the sensitivity check is reported in \OA{F}.

\begin{table}[t]
\caption{\label{tab:tep}TEP: coefficient magnitude versus persistence, selected edges.}
\begin{center}
\begin{tabular*}{0.7\textwidth}{@{}l@{\extracolsep{\fill}}lcc@{}}
\hline\hline
Source & Target & $|\hat b|$ & Stability \\
\hline
\texttt{xmeas\_10} & \texttt{xmeas\_36} & 3.311 & 0.18 \\
\texttt{xmv\_6}    & \texttt{xmeas\_36} & 3.278 & 0.024 \\
\texttt{xmeas\_29} & \texttt{xmeas\_35} & 0.201 & {\bf 0.80} \\
\texttt{xmeas\_27} & \texttt{xmeas\_36} & 0.197 & {\bf 0.77} \\
\hline\hline
\end{tabular*}
\end{center}
\footnotesize
\renewcommand{\baselineskip}{11pt}
\textbf{Note:} Stability is the share of $500$ moving-block-bootstrap resamples (block length
$10$) in which the edge is retained. Bold entries pass the persistence rule (stability $\ge0.70$,
sign consistency $\ge0.80$).
\end{table}

\begin{figure}[t]
\centering
{\includegraphics[width=0.72\textwidth]{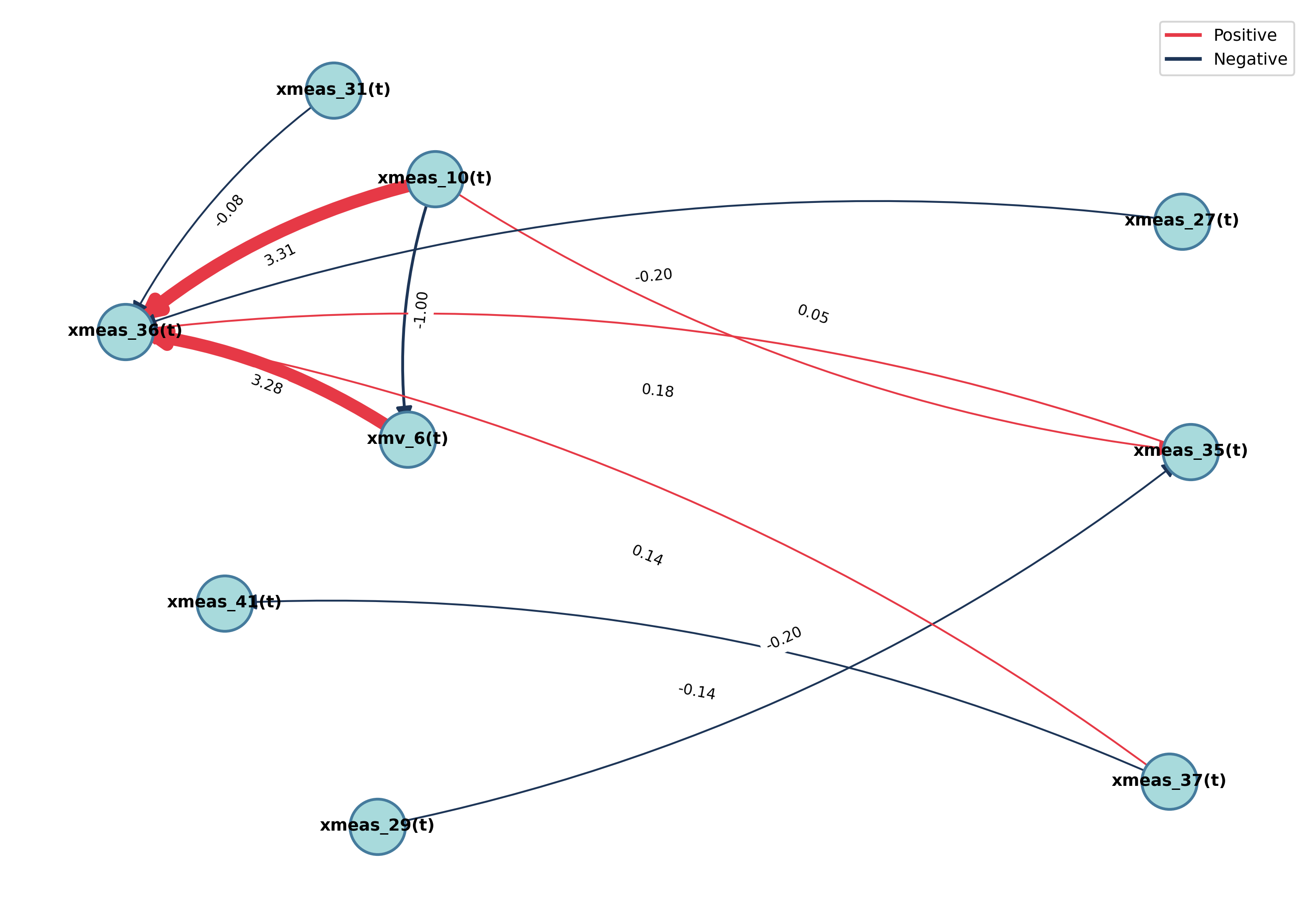}}
\caption{\label{fig:tepgraph}TEP instantaneous graph before bootstrap filtering (seed-0 point fit of the $14$-variable specification).}
\end{figure}

\subsection{Independence support}
\label{sec:indep}
For TEP, recovered-shock MI is low after decomposition (off-diagonal mean $0.037$, against
substantially higher MI among raw residuals; \OA{D}). The joint dHSIC statistic
($0.00490$) exceeds every draw of the block-permutation reference ($\ell=10$, $999$ permutations,
null mean $0.00233$, standard deviation $0.00020$), so the diagnostic rejects joint independence
at $p=0.001$, the smallest value attainable with $999$ permutations, even though average dependence is low, because weak dependence spread across many
pairs accumulates. Pairwise, $58$ of $91$ pairs are individually consistent
with independence at the $10\%$ level and $33$ are not. Maximum upper- and lower-tail
co-exceedance are $0.243$ and $0.284$ against an independence benchmark of $0.05$; with few joint
tail observations we treat these as descriptive. TEP thus has a favourable average dependence
profile, but joint independence is not supported.

For Compustat, MI falls from a mean of $0.340$ on raw residuals to $0.061$ on the recovered
shocks across the $60$ IDs, with wide heterogeneity: many IDs near zero and a right tail of
high-dependence IDs (Figure~\ref{fig:cmi}; the left panel ranks IDs by mean recovered-shock MI
with the median-MI ID marked by the dashed line, the right panel plots raw-residual MI against
recovered-shock MI with most points below the diagonal). The block-permutation dHSIC statistic
ranges over $0.009$--$0.062$; the $p$-value distribution is dispersed (median $0.097$), with
independence not rejected ($p>0.10$) for $29$ of $60$ IDs and rejected for $31$. The tail
diagnostic at $q=0.90$ has a median maximum upper-tail co-exceedance of $0.260$ (benchmark
$0.10$); only $4$ IDs pass a tail screen, each reflecting a near-absence of exceedances rather than
demonstrated tail independence; the overlap between dHSIC non-rejection and the tail screen is
$2$ of $60$, which we report descriptively.
Table~\ref{tab:indep} summarizes.

\begin{figure}[t]
\centering
{\includegraphics[width=0.49\textwidth]{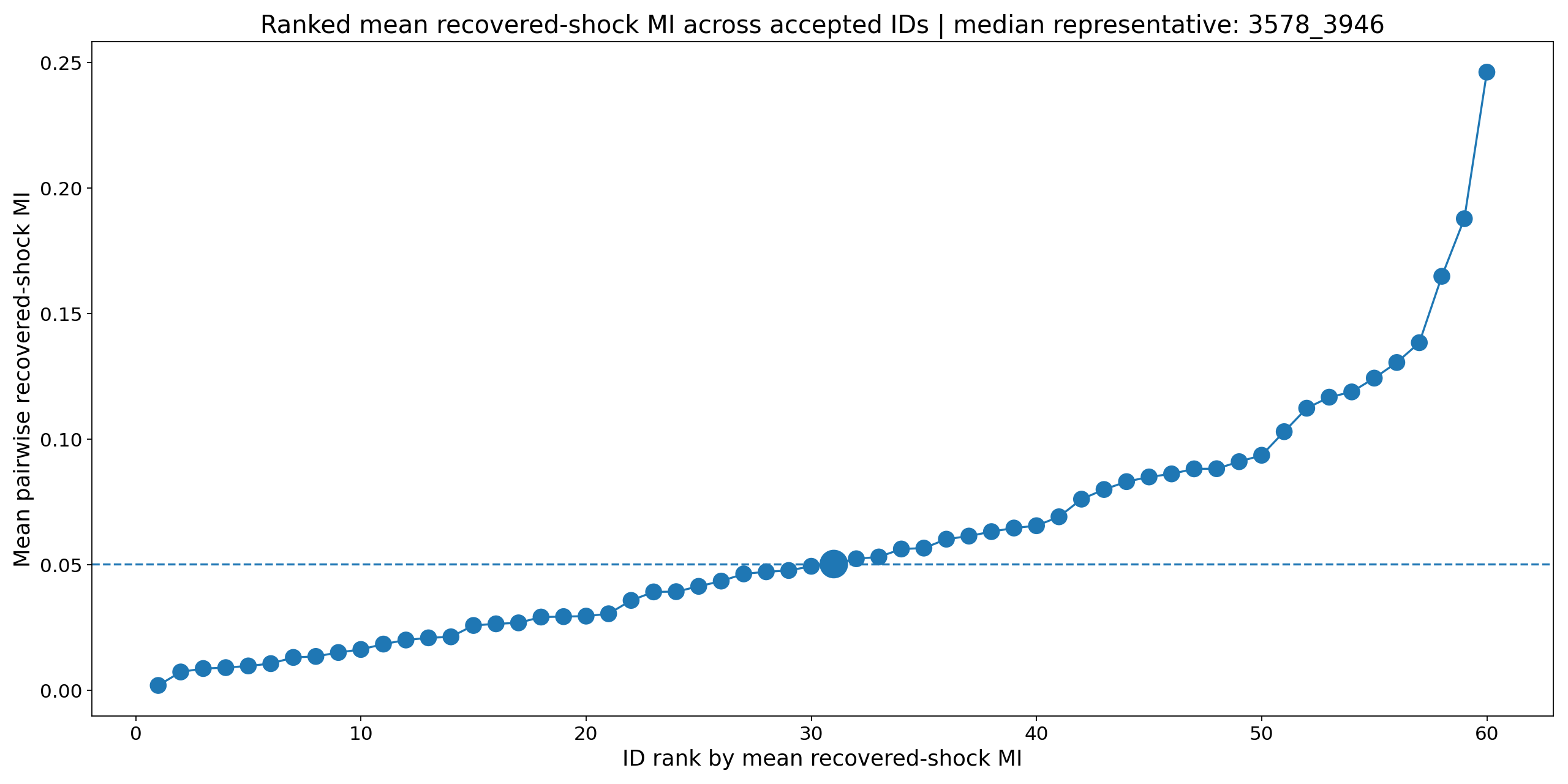}}\hfill
{\includegraphics[width=0.49\textwidth]{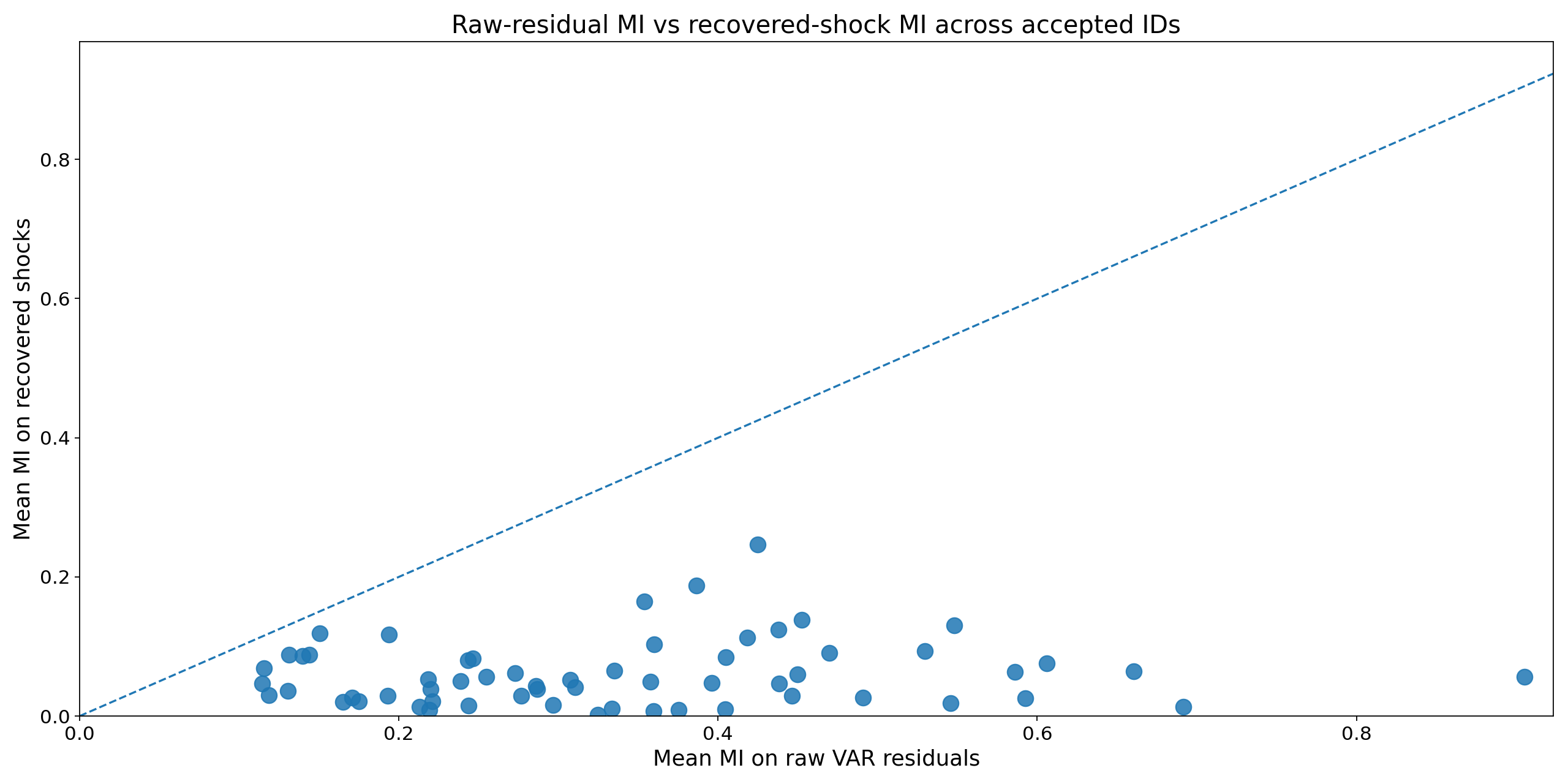}}
\caption{\label{fig:cmi}Compustat recovered-shock MI across the $60$ accepted IDs.}
\end{figure}

\begin{table}[t]
\caption{\label{tab:indep}Independence-support summary.}
\begin{center}
\small
\begin{tabular*}{\textwidth}{@{}l@{\extracolsep{\fill}}cc@{}}
\hline\hline
 & \shortstack[c]{TEP\\($14$ shocks, $T=493$)} & \shortstack[c]{Compustat\\($60$ IDs, median $n=44.5$)} \\
\hline
Recovered-shock MI (mean) & 0.037 & 0.061 (from 0.340 raw) \\
Joint dHSIC statistic & 0.00490 & 0.009--0.062 \\
Block-permutation dHSIC $p$ & 0.001 ($\ell=10$) & median 0.097; 29 with $p>0.10$ ($\ell=4$) \\
Max tail $\widehat\chi$ (upper/lower) & 0.243/0.284 ($q=0.95$) & median 0.260 ($q=0.90$) \\
Independence support (dHSIC) & not supported & supported for 29/60 \\
\hline\hline
\end{tabular*}
\end{center}
\footnotesize
\renewcommand{\baselineskip}{11pt}
\textbf{Note:} TEP is a single $14$-variable system; Compustat entries are distributions across
$60$ accepted IDs. The tail quantile is $q=0.90$ for Compustat because $q=0.95$ leaves two or three
exceedances on $n\approx45$. The descriptive dHSIC-and-tail overlap contains $2$ of the $60$
Compustat IDs.
\end{table}

\subsection{Persistence and independence support are non-nested}
\label{sec:nonnest}
We classify the $60$ Compustat IDs by (independence support) $\times$ (leading-edge persistence),
taking support as dHSIC non-rejection ($p>0.10$) and persistence as the rule of
Definition~\ref{def:pers} applied to each ID's largest-magnitude recovered edge, the edge an
investigator would report (Table~\ref{tab:2x2}). Both off-diagonal cells are populated: eleven IDs
have a persistent leading edge without independence support, and twenty have independence support
without a persistent leading edge. The classification is robust to its ingredients. Replacing
``leading edge persists'' with ``any recovered edge persists'' enlarges the persistent column from
$20$ to $31$ but leaves both off-diagonal cells non-empty ($15$ and $17$). As descriptive
sensitivity checks, cross-tabulating persistence with the MI low-dependence subset gives
off-diagonal counts of $12$ and $17$, and with the dHSIC-and-tail overlap $2$ and $20$; the
dHSIC classifier remains the operational rule, and in every version the supported-and-persistent
cell is small ($9$, $3$ and $0$). Varying the persistence thresholds over stability
$\{0.60,0.70,0.80\}$ and sign consistency $\{0.70,0.80,0.90\}$, recomputed from the stored
bootstrap coefficients, changes counts but not the pattern: both off-diagonal cells remain
populated in all nine combinations, with the persistent-but-unsupported cell ranging over
$4$--$15$ IDs and the supported-but-not-persistent cell over $17$--$28$.

\begin{table}[t]
\caption{\label{tab:2x2}Non-nestedness of persistence and independence support, Compustat.}
\begin{center}
\begin{tabular*}{0.8\textwidth}{@{}l@{\extracolsep{\fill}}ccc@{}}
\hline\hline
 & Not persistent & Persistent & Total \\
\hline
Independence support ($p>0.10$)      & 20 & 9  & 29 \\
No independence support ($p\le0.10$) & 20 & 11 & 31 \\
\hline
Total                                & 40 & 20 & 60 \\
\hline\hline
\end{tabular*}
\end{center}
\footnotesize
\renewcommand{\baselineskip}{11pt}
\textbf{Note:} $60$ accepted IDs. Rows: block-permutation dHSIC support ($\ell=4$, $499$
permutations). Columns: persistence of the leading (largest-magnitude) recovered edge under
Definition~\ref{def:pers} with $\eta=0.02$, stability $\ge0.70$ and sign consistency $\ge0.80$. The
leading-edge classification is identical for every $\eta\in[0.02,0.20]$ and changes by one edge at
the noise-calibrated $\eta=0.30$ (Section~\ref{sec:survive}).
\end{table}

The partition is between two \emph{diagnostics}, each computed with sampling error, and
Section~\ref{sec:mc} shows that at $n\approx45$ the support diagnostic has limited power against
moderate dependence. The empirical cells are therefore read descriptively, not as unbiased
estimates of population cells; the population distinction rests on Proposition~\ref{prop:constr}
and its finite-sample recoverability on Table~\ref{tab:mc-main}. What the data add is that the two
post-recovery quantities considered here, bootstrap persistence and measured independence
support, come apart in the same firm-level systems, so one cannot stand in for the other.

The two off-diagonal cells also carry asymmetric interpretive weight. A persistent edge without
independence support is the dangerous configuration: a highly reproducible relation with no
empirical entitlement to a structural reading. Independence support without a persistent leading
edge is not its mirror image and need not indicate a credibility failure: it means that the
leading recovered relation is not reproducible under the stated rule, and other recovered edges
may nevertheless be persistent (the any-edge robustness check above implies that $5$ of these
$20$ systems retain some persistent edge). In the Monte Carlo the corresponding cell is generated
by a correctly specified zero-edge process (Section~\ref{sec:mc}), for which the absence of a
reportable edge is the right answer. Layer~4 thus filters
what may be reported, while Layer~3 governs what may be interpreted.

\subsection{Candidate structural standing and robustness}
\label{sec:survive}
Under the operational classifier, the supported-and-persistent cell of Table~\ref{tab:2x2}
contains $9$ of $60$ systems. The descriptive measures point the same way. Ranking systems by MI
and taking the low-dependence quartile ($15$ IDs), the persistence rule applied to their $57$
recovered firm-edge observations retains $8$ firm-edges, concentrated in the operating-performance
block (gp--revt), with heterogeneous direction across firms: gp$\to$revt appears in $27$ IDs and
revt$\to$gp in $32$ in the baseline, $6$ versus $9$ in the low-MI quartile and $4$ versus $1$ among
the retained firm-edges. Sign consistency is high \emph{within} each retained firm-edge (mean
$\approx1.0$), so the pattern is consistent with cross-firm heterogeneity rather than within-firm
instability. The $2$ IDs in the descriptive dHSIC-and-tail overlap have no persistent leading
edge. Systems combining no detected independence violation with a reportable persistent leading
edge are therefore rare in this sample. The
economic reading is strictly downstream: gross profit is an accounting function of revenue, but
that identity does not fix the ordering of their innovations.

Four robustness checks bear directly on the protocol's operational choices. Magnitude cutoff:
because the bootstrap stores every retained coefficient, the persistence classification can be
recomputed for any cutoff without re-estimation. The median bootstrap standard deviation of a
Compustat leading coefficient is $0.16$ (interquartile range $0.11$--$0.29$), so the
two-standard-deviation rule of Section~\ref{sec:mc} suggests $\eta\approx0.3$. The set of $20$
persistent leading edges is identical for every $\eta\in\{0.02,0.10,0.20\}$; at $\eta=0.30$ one
edge drops out (Table~\ref{tab:2x2} becomes $20/9/21/10$); even at $\eta=0.50$ both off-diagonal
cells remain populated ($23$ and $8$, with $6$ supported-and-persistent). Non-nestedness in the
data does not depend on the cutoff. Classification level: recomputing the support column of
Table~\ref{tab:2x2} at $\alpha=0.05$ from the stored $p$-values gives $24/11/16/9$ (support and
not persistent, support and persistent, no support and not persistent, no support and
persistent): both off-diagonal cells remain populated at the $5\%$ and $10\%$ levels. Bootstrap
block length: recomputing the moving-block bootstrap at block lengths $l_b\in\{4,6,8\}$ ($500$
resamples per firm, exact re-run of the persistence pipeline) gives $2\times2$ tables of
$23/6/21/10$, $22/7/24/7$ and $21/8/23/8$: both off-diagonal cells remain populated at every
block length, and persistence labels agree with the $l_b=10$ baseline for $52$--$54$ of the $60$
firms. Lag order:
a lag-1 specification admits $73$ IDs ($0.61\%$),
overlapping the baseline in $33$. Its residual layer retains more dependence than the lag-2
baseline (dHSIC statistics $0.011$--$0.058$, median $p=0.014$, support for $19$ of $73$ IDs
against $29$ of $60$), while $32$ of $73$ leading edges are persistent, again for every
$\eta\le0.30$. The lag-1 counterpart of Table~\ref{tab:2x2} is $10/9/31/23$ (support and not
persistent, support and persistent, no support and not persistent, no support and persistent):
both off-diagonal cells are again populated, the table is unchanged for every $\eta\le0.30$, the
persistent-but-unsupported cell is larger than at lag 2, and the operating-performance block
persists across specifications. Because the two lag orders admit different firms, we also compare
them on the $33$ firms admissible under both (\OA{E}). On this common sample the lag-1 and lag-2
tables are $4/3/15/11$ and $8/6/13/6$: both off-diagonal cells are populated at both lag orders,
the leading edge is the same ordered pair for $21$ of $33$ firms (the same unordered pair for
$26$), persistence labels agree for $25$ of $33$ firms, and the lower support count at lag 1 ($7$
against $14$) therefore cannot be attributed to sample composition alone. \OA{E} also gives
the sample flow from $11{,}945$ candidate firms to the $60$ admissible systems.

\section{Discussion: scope and open questions}
\label{sec:disc}
The protocol is diagnostic, not an identification theorem; it presupposes the VAR--LiNGAM result
and asks the post-recovery question. Three items define its frontier. First, formal
inference on the independence of estimated structural shocks already exists in important ICA and
SVAR settings (\citealp{wu2009}; \citealp{amengual2022}; \citealp{amengual2024}). What remains
open here is narrower: the validity of the particular serial-aware diagnostic after the full
VAR--LiNGAM recovery map, including dynamic estimation, structural recovery and data-dependent
ordering, under weak temporal dependence. The block-permutation dHSIC is, by simulation, close to
nominal size on recovered shocks (Table~\ref{tab:mc-main}), but that is finite-sample evidence,
not a theorem. Relatedly, the population pseudo-true coefficient restrictions behind the persistence axis of the
dependent-shock constructions (Assumption~\ref{a:pseudo}) are, for the construction used here, derived exactly
(Appendix~\ref{app:nn}, Proposition~\ref{prop:sym}): for every finite $\lambda$ the axis-aligned orbit is the
unique maximizer of the population contrast, so $\beta^\dagger(0,\lambda)=0$, and for $|b|<2$ the recovered
coefficient is $b$ itself with a strict recovery decision of margin $2-|b|$; the numerical integration of
Table~\ref{tab:a3pop} checks the implementation against these values. The derivation concerns the population
contrast maximized globally, not the sample algorithm. The direction of non-nestedness they
support, persistence without independence support, is the direction that carries the interpretive risk, and
what remains open is the general case: the sample-selected leading edge of a system with more than two
variables requires a margin condition, a unique population leading edge separated from the runner-up, and
the behaviour of the recovery decision under dependent shocks is not derived beyond the two-variable
construction; and the operating characteristics of the persistence rule at a cutoff fixed in bootstrap-noise
units are left to later work. Second, tail co-exceedance is nonparametric and,
in the short Compustat series, informative at the aggregate rather than per-firm level; a
size-controlled tail-independence diagnostic is left open. Third, the logic of recovery,
persistence and independence support as distinct, non-nested objects extends to any
conditional-identification procedure in which recovery is easier than structural entitlement,
including ICA-based and higher-moment SVARs; we restrict attention to VAR--LiNGAM as an established
causal-discovery case. The directional-standing standard of Section~\ref{sec:standing} is stated as a
standard of evidence; a worked application at the scale of a large financial system, where the three
assessments can be run out of sample, is the natural test of whether it is demanding in the right places.

\section{Conclusion}
\label{sec:conc}
Distribution-based recovery returns an object that looks structural, and the bootstrap can make
it look stable, without either step confronting the assumption that licenses a structural reading.
We organize the move from recovery to interpretation as a four-layer protocol, show that its last
two layers, bootstrap persistence and independence support, are non-nested in population, in
simulation and in the Compustat application, and make independence support a jointly reported
quantity alongside persistence. The two failures are not symmetric: persistence without
independence support does not license a structural claim, whereas the absence of a persistent
leading edge does not, by itself, invalidate an independence-compatible decomposition. Two
operational messages follow for applied use of VAR--LiNGAM: calibrate structural
claims to the alignment of the evidentiary layers, treating independence support as the decisive
post-recovery evidentiary layer, rather than recovery or persistence; and scale the bootstrap
magnitude cutoff to sampling noise, because
otherwise persistence can be declared for edges that do not exist.

\section*{Acknowledgements}
The authors have no financial or other interests to disclose in connection with this research
and received no external funding for it. Replication
code for every reported number (the Compustat and TEP pipelines and the Monte Carlo experiment)
is available from the authors; the Compustat extract is licensed and is not redistributed.

\appendix

\section{A constructive non-nestedness result}
\label{app:nn}
Consider the class $\mathcal D$ of bivariate stationary SVAR(1) processes
\begin{equation}
x_t=\phi x_{t-1}+e^x_t,\qquad y_t=b\,x_t+\phi y_{t-1}+e^y_t,\qquad |\phi|<1,
\label{eq:dgp}
\end{equation}
with $(e^x_t,e^y_t)$ strictly stationary, with finite variance, $E[e^x_te^y_t]=0$ and at least one
non-Gaussian component. No moment condition beyond finite variance is imposed on $\mathcal D$
itself: the convergence properties the argument needs are stated directly as
Assumptions~\ref{a:rec}--\ref{a:test}, and the standardized $t_3$ shocks used in the constructions
and in Section~\ref{sec:mc}, which have finite moments of order below three but no finite fourth
moment, belong to $\mathcal D$. For
$D\in\mathcal D$ let $S_0(D)=1$ if the population dHSIC of the recovered shocks is zero (the
recovered shocks are jointly independent) and $S_0(D)=0$ otherwise. Let $\beta^\dagger(D)$
denote the leading instantaneous coefficient of the \emph{limiting recovered decomposition}, that
is, the probability limit of the recovered leading coefficient; under independent shocks
$\beta^\dagger$ coincides with the true coefficient $b$ (Assumption~\ref{a:rec}), under dependent
shocks it is the pseudo-true value of Assumption~\ref{a:pseudo}. Let $P_0(D)=1$ if
$\beta^\dagger(D)\neq0$ and $P_0(D)=0$ if $\beta^\dagger(D)=0$. Defining persistence on the limiting
recovered object rather than on $b$ keeps $P_0$ aligned with what the bootstrap classifier
estimates. The operational classifiers are $\widehat S_T=1$ when the block-permutation dHSIC does not reject at
level $\alpha$, and $\widehat P_T=1$ when the moving-block-bootstrap stability and sign consistency
of the sample leading edge clear $(\underline s,\underline c)$ with magnitude cutoff $\eta_T$.

\begin{assumption}\label{a:spec}
The VAR order and the acyclic contemporaneous form are correctly specified.
\end{assumption}
\begin{assumption}\label{a:rec}
If $(e^x_t,e^y_t)$ are i.i.d.\ with mutually independent components and at most one Gaussian
marginal, the VAR--LiNGAM estimator is consistent for $B$ and the recovered shocks converge in
distribution to the true shocks (\citealp{shimizu2006}; \citealp{hyvarinen2010}). We additionally
assume that, under the same conditions, the moving-block bootstrap is consistent for the recovered
leading coefficient: when $b\neq0$ the bootstrap coefficient converges to $b$ with the sign of $b$, and
when $b=0$ the sample and bootstrap leading coefficients satisfy the rate conditions of
Lemma~\ref{lem:weakedge} below.
\end{assumption}
\begin{assumption}\label{a:pseudo}
For the dependent-shock constructions used below, the recovery map converges to a non-singular
pseudo-true decomposition whose leading coefficient $\beta^\dagger(b,\lambda)$ is continuous in the
parameters of \eqref{eq:dgp}, satisfies $\beta^\dagger(0,\lambda)=0$, and at the selected $b\neq0$
is bounded away from zero with the sign of $b$; the moving-block bootstrap consistently reproduces
this pseudo-true limit and, when $\beta^\dagger=0$, the sample and bootstrap leading coefficients
satisfy the rate conditions of Lemma~\ref{lem:weakedge} below.
\end{assumption}
\begin{assumption}\label{a:test}
The dHSIC kernel is characteristic, so the population dHSIC is zero if and only if the recovered
shocks are jointly independent (\citealp{gretton2008}; \citealp{pfister2018}); applied to
consistently recovered i.i.d.\ independent shocks, the block-permutation dHSIC has asymptotic level
$\alpha$ and is consistent against fixed alternatives.
\end{assumption}
Assumption~\ref{a:pseudo} is needed because, under misspecification induced by dependent shocks,
a pseudo-true recovery could in principle create a non-zero leading coefficient when $b=0$ or
attenuate a true edge; continuity alone does not exclude this, so the two properties the proof
uses are stated explicitly. Its role is confined to the persistence axis of the dependent-shock
constructions: on the support axis it enters only through non-singularity of the limit, because
Lemma~\ref{lem:scale} shows that the dependence induced by a shared scale survives every
non-singular linear recovery. Assumption~\ref{a:test} is the high-level condition whose formal
derivation on recovered shocks under mixing remains open (Section~\ref{sec:disc}).
Table~\ref{tab:mc-main} provides finite-sample evidence consistent with the operational
implications of both assumptions for the designs used here: with $\eta=0.20$ the dependent-shock
$b=0$ design is declared
persistent in $16\%$ of replications, against $1\%$ for its independent-shock counterpart, which
is consistent with $\beta^\dagger(0,\lambda)=0$ together with the larger bootstrap noise that
shared volatility induces.

\begin{lemma}\label{lem:weakedge}
Let the limiting leading coefficient be $\beta^\dagger=0$ and let $\hat b_T$, $\hat b^*_T$ denote the sample
and moving-block-bootstrap coefficients of the sample-selected leading edge, with
$\hat b_T=O_p(T^{-1/2})$ and $\hat b^*_T-\hat b_T=O_{p^*}(T^{-1/2})$ in probability. If
$\eta_T\downarrow0$ with $\sqrt T\eta_T\to\infty$, $P^*(|\hat b^*_T|\ge\eta_T)\to_p0$, so the
edge is declared not persistent with probability tending to one. If instead $\eta_T$ is of the
same order as the sampling noise, $\sqrt T\,\eta_T\to c<\infty$, the bootstrap exceedance
probability $P^*(|\hat b^*_T|\ge\eta_T)$ need not vanish, so the edge may be declared persistent
with non-vanishing probability.
\end{lemma}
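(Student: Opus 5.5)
The plan is to treat the two parts of Lemma~\ref{lem:weakedge} separately, both via the decomposition $\hat b^*_T=\hat b_T+(\hat b^*_T-\hat b_T)$ together with the two rate conditions. For the first part, fix $\varepsilon>0$. By the tightness hypotheses there is $M$ with $\Pr(\sqrt T|\hat b_T|>M)<\varepsilon$ for all large $T$. There is also an event $A_T$ with $\Pr(A_T)\ge 1-\varepsilon$ on which $P^*(\sqrt T|\hat b^*_T-\hat b_T|>M)<\varepsilon$; this is exactly the meaning of $O_{p^*}(T^{-1/2})$ in probability. On the intersection of $A_T$ with $\{\sqrt T|\hat b_T|\le M\}$, the triangle inequality gives
\[
P^*\bigl(|\hat b^*_T|\ge\eta_T\bigr)\le P^*\bigl(\sqrt T|\hat b^*_T-\hat b_T|\ge \sqrt T\eta_T-M\bigr).
\]
Since $\sqrt T\eta_T\to\infty$, eventually $\sqrt T\eta_T-M\ge M$, so the right side is below $\varepsilon$. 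The intersection has probability at least $1-2\varepsilon$, which gives $P^*(|\hat b^*_T|\ge\eta_T)\to_p0$.

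To pass from this to ``declared not persistent'', I use Definition~\ref{def:pers}. Stability is the bootstrap share of resamples with $|\hat b^*|\ge\eta_T$; I treat it as the bootstrap probability, or, with $B_T\to\infty$ resamples, as its Monte Carlo estimate, which concentrates around it by a conditional Chebyshev bound. Persistence requires stability at least $\underline s>0$, so $\Pr(\widehat P_T=1)\le\Pr\bigl(P^*(|\hat b^*_T|\ge\eta_T)\ge\underline s-o_p(1)\bigr)\to0$. The sign-consistency condition is not needed, because failing the stability condition alone blocks persistence.

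For the second part, the statement only claims that the exceedance probability ``need not vanish'', so a single admissible example is enough. I will take the bivariate design \eqref{eq:dgp} with $b=0$ and independent $t_3$ shocks, so that $\beta^\dagger=0$ by Assumption~\ref{a:rec}. The aim is the limits $\sqrt T\hat b_T\Rightarrow Z$ and $\sqrt T(\hat b^*_T-\hat b_T)\Rightarrow Z^*$ conditionally, with $Z^*$ non-degenerate and symmetric around zero. I will argue it as follows. Along any subsequence on which $\sqrt T\hat b_T\to z$, the exceedance probability converges to $\Pr(|z+Z^*|\ge c)$, which is positive whenever $Z^*$ has unbounded support, or even only mass outside $[-c-|z|,c+|z|]$. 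Hence $P^*(|\hat b^*_T|\ge\eta_T)$ is bounded away from zero with positive probability.

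Non-vanishing exceedance does not by itself give stability at least $\underline s$. For persistence with non-vanishing probability I also need $\Pr\bigl(\Pr(|z+Z^*|\ge c)\ge\underline s\bigr)>0$ over the law of $z$. This holds if $Z$ has unbounded support, because $|z|$ large pushes the exceedance towards one. Sign consistency then also holds, since for $|z|$ large the bootstrap draws mostly share the sign of $\hat b_T$.

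The main obstacle is this second part, because the rate conditions in the lemma give only tightness, not distributional limits. To close it I will add the mild assumption that the normalized sample and bootstrap coefficients have non-degenerate limits, for instance Gaussian or mixtures induced by the sample-selected maximum. The Monte Carlo of Table~\ref{tab:mc-main}, Panel~B, where the $b=0$ design is persistent about $47\%$ of the time at $\eta=0.02$, serves as supporting evidence that this assumption holds.
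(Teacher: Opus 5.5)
Your first part is the paper's argument. The paper writes $|\hat b^*_T|\le|\hat b_T|+|\hat b^*_T-\hat b_T|=O_p(T^{-1/2})+O_{p^*}(T^{-1/2})=o_p(\eta_T)$ in one line, and your $\varepsilon$--$M$ bound is that line written out.

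You also supply the step from vanishing exceedance probability to ``declared not persistent'', which the paper asserts without argument. You do not need $B_T\to\infty$ there: with a fixed number of resamples the stability share has conditional mean $P^*(|\hat b^*_T|\ge\eta_T)$. Conditional Markov, $P^*(\text{stability}\ge\underline s)\le P^*(|\hat b^*_T|\ge\eta_T)/\underline s$, together with bounded convergence, already gives $\Pr(\widehat P_T=1)\to0$.

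The second part is where the routes differ. The paper only remarks that $\sqrt T|\hat b^*_T|$ and $\sqrt T\eta_T$ are of the same stochastic order, so nothing forces the exceedance probability to zero. It never checks the stability and sign thresholds behind ``may be declared persistent''. You posit non-degenerate limits and verify both things: the exceedance probability stays positive, and, via unbounded support of the limit of $\sqrt T\hat b_T$, stability and sign consistency clear $(\underline s,\underline c)$ with positive probability. That is more than the paper proves.

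Your added distributional assumption is not a gap, because ``need not vanish'' is an existence claim relative to the stated rate conditions. One stylized instance compatible with those conditions suffices, for example Gaussian limits for $\sqrt T\hat b_T$ and $\sqrt T(\hat b^*_T-\hat b_T)$. You can therefore drop the attempt to anchor the example in the $t_3$ design, and the appeal to Panel~B, which as Monte Carlo evidence could not certify the assumption anyway.

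The paper's closing remark, that a maximum over the fixed finite set of candidate edges preserves the limit, justifies placing the rate conditions on the sample-selected edge. You take them as given, which the lemma's hypotheses allow.
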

\begin{proof}
$|\hat b^*_T|\le|\hat b_T|+|\hat b^*_T-\hat b_T|=O_p(T^{-1/2})+O_{p^*}(T^{-1/2})=o_p(\eta_T)$ since
$T^{-1/2}=o(\eta_T)$, giving the first claim; for the second, $\sqrt T|\hat b^*_T|$ and
$\sqrt T\eta_T$ are then of the same stochastic order, so the exceedance probability need not
converge to zero. The finite set of candidate edges is fixed, so the maximum over it preserves the
limit.
\end{proof}
Lemma~\ref{lem:weakedge} separates the cutoff regimes by rate; its finite-sample counterpart is
Panel~B of Table~\ref{tab:mc-main}. At a fixed sample size the same issue arises whenever a fixed
numerical cutoff is small relative to the realized bootstrap noise scale: with $\eta=0.02$ at
$T=250$ the cutoff lies below that scale and the non-existent edge is declared persistent in
$47\%$ of replications, while with $\eta=0.20$, about two bootstrap standard deviations, it is
not.

The second lemma settles the support axis of the dependent-shock constructions: dependence
generated by a shared scale factor cannot be removed by any non-singular linear recovery, so it is
inherited by the recovered shocks whatever the pseudo-true unmixing happens to be.

\begin{lemma}[Shared scale survives linear recovery]\label{lem:scale}
Let $e_t=V_t\,u_t$ with $u_t=(u_{1t},\ldots,u_{dt})'$, $d\ge2$, where the $u_{jt}$ are independent, centred,
non-degenerate Gaussian scale mixtures with finite variance, and $V_t>0$ is non-degenerate with
$E[V_t^2]<\infty$ and independent of $u_t$. Then for every non-singular $d\times d$ matrix $M$ and every
pair of indices $i\neq k$ the components of $\tilde e_t=Me_t$ satisfy, with $w_t=Mu_t$,
\begin{equation*}
\operatorname{Cov}\bigl(|\tilde e_{it}|,|\tilde e_{kt}|\bigr)\;\ge\;
\operatorname{Var}(V_t)\,E|w_{it}|\,E|w_{kt}|\;>\;0,
\end{equation*}
so every pair of recovered shocks is dependent and, the dHSIC kernel being characteristic
(Assumption~\ref{a:test}), the population dHSIC of $\tilde e_t$ is strictly positive. The case $d=2$ is the
one used in the constructions of this appendix; the general case is stated because the applications that
rely on the lemma are large systems. The standardized $t_3$ shocks of the constructions below are such mixtures, and
$V_t=\exp(\tfrac12\lambda z_t)$ with $z_t$ standard normal and $\lambda>0$ satisfies the
conditions on $V_t$.
\end{lemma}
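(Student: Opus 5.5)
Since $V_t>0$ is a scalar, $\tilde e_t=Me_t=V_t\,Mu_t=V_t\,w_t$, so $|\tilde e_{it}|=V_t|w_{it}|$ and $|\tilde e_{kt}|=V_t|w_{kt}|$, with $V_t$ independent of $w_t$. The plan is to condition on $w_t$, separate the shared scale from the mixture, and then handle the remaining covariance of $|w_{it}|$ and $|w_{kt}|$ through the Gaussian-scale-mixture structure.

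\emph{Step 1 (decomposition).} I would start from the identity
\[
E\bigl[V_t^2|w_{it}||w_{kt}|\bigr]-E[V_t]^2E|w_{it}|E|w_{kt}|
=E[V_t^2]\operatorname{Cov}(|w_{it}|,|w_{kt}|)+\operatorname{Var}(V_t)E|w_{it}|E|w_{kt}|.
\]
Both sides are finite because $E[V_t^2]<\infty$ and $u$ has finite variance. Under this identity the stated bound reduces to showing $\operatorname{Cov}(|w_{it}|,|w_{kt}|)\ge0$. Strict positivity then follows because $\operatorname{Var}(V_t)>0$ by non-degeneracy of $V_t$. Moreover $E|w_{it}|>0$, since $w_{it}$ is a non-trivial combination of non-degenerate $u_j$: row $i$ of the non-singular $M$ is non-zero.

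\emph{Step 2 (non-negativity of $\operatorname{Cov}(|w_i|,|w_k|)$).} This is the main obstacle. I would write each $u_j=\sigma_j g_j$, with the $\sigma_j>0$ and the $g_j$ standard normal and everything independent. Conditional on $\sigma=(\sigma_1,\dots,\sigma_d)$, the pair $(w_i,w_k)$ is centred bivariate Gaussian with covariance matrix $\Sigma(\sigma)=M\,\mathrm{diag}(\sigma^2)M'$. By the law of total covariance,
\[
\operatorname{Cov}(|w_i|,|w_k|)=E\bigl[\operatorname{Cov}(|w_i|,|w_k|\mid\sigma)\bigr]+\operatorname{Cov}\bigl(E[|w_i|\mid\sigma],E[|w_k|\mid\sigma]\bigr).
\]
The first term is non-negative because, for a centred bivariate Gaussian, $\operatorname{Cov}(|X|,|Y|)$ is an even, non-negative function of the correlation. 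This follows from the arcsine-type formula $E|X||Y|=\tfrac{2}{\pi}s_Xs_Y(\sqrt{1-\rho^2}+\rho\arcsin\rho)$, which is at least $\tfrac{2}{\pi}s_Xs_Y$. For the second term, $E[|w_i|\mid\sigma]=\sqrt{2/\pi}\,(\sum_j M_{ij}^2\sigma_j^2)^{1/2}$, and likewise for $k$. Both are coordinatewise non-decreasing functions of the independent vector $(\sigma_1^2,\dots,\sigma_d^2)$, so Harris--FKG (association of independent random variables) gives non-negative covariance. Finite variance keeps all these expectations finite.

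\emph{Step 3 (from covariance to dHSIC).} A non-zero covariance between measurable functions of $\tilde e_{it}$ and $\tilde e_{kt}$ rules out independence of that pair, and therefore rules out joint independence of $\tilde e_t$. With a characteristic kernel, Assumption~\ref{a:test} then yields strictly positive population dHSIC. Finally I would check the illustrative case. A standardized $t_3$ variable is $g/\sqrt{\chi^2_3/3}$ rescaled, which is a Gaussian scale mixture with finite variance. Also $V_t=\exp(\lambda z_t/2)$ is positive, non-degenerate for $\lambda>0$, and has $E[V_t^2]=e^{\lambda^2/2}<\infty$.
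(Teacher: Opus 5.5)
Your proof is correct and follows essentially the same route as the paper. You use the same covariance decomposition, which relies on $V_t$ being independent of $w_t$, then the Nabeya absolute-moment bound conditional on the mixing scales and the Harris association inequality for the conditional means $E[|w_{i}|\mid\sigma]$. Your law-of-total-covariance framing is just a repackaging of the paper's two combined inequalities. The one step you leave unjustified is $\rho\arcsin\rho+(1-\rho^2)^{1/2}\ge1$; it holds because the derivative of the left side is $\arcsin\rho$, so the minimum, equal to one, is at $\rho=0$.
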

\begin{proof}
Drop the subscript $t$, fix $i\neq k$, and write $u_j=\sigma_jg_j$ with $g_1,\ldots,g_d$ standard normal,
$\sigma_j>0$, and $(\sigma_1,\ldots,\sigma_d,g_1,\ldots,g_d)$ mutually independent; for a Student $t_3$
variable $\sigma_j^2$ is inverse gamma. Since $V>0$ is independent of $w=Mu$, $|\tilde e_i|=|w_i|V$ and
\begin{equation*}
\operatorname{Cov}\bigl(|\tilde e_i|,|\tilde e_k|\bigr)
=E[V^2]\,E|w_iw_k|-(EV)^2\,E|w_i|\,E|w_k|,
\end{equation*}
with all moments finite because $E[w_i^2]<\infty$ and $E[V^2]<\infty$. Conditional on
$\sigma=(\sigma_1,\ldots,\sigma_d)$, $(w_i,w_k)$ is centred bivariate Gaussian with standard deviations
$s_i(\sigma)=(\sum_jm_{ij}^2\sigma_j^2)^{1/2}$ and some correlation $\rho$, so by the absolute-moment
formula of \citet{nabeya1951},
$E[|w_iw_k|\mid\sigma]=\tfrac2\pi s_i(\sigma)s_k(\sigma)\{\rho\arcsin\rho+(1-\rho^2)^{1/2}\}\ge
\tfrac2\pi s_i(\sigma)s_k(\sigma)=E[|w_i|\mid\sigma]\,E[|w_k|\mid\sigma]$, because
$\rho\arcsin\rho+(1-\rho^2)^{1/2}$ has derivative $\arcsin\rho$ and is therefore minimized at
$\rho=0$, where it equals one. Each $E[|w_i|\mid\sigma]=(2/\pi)^{1/2}s_i(\sigma)$ is nondecreasing
in every $\sigma_j$, so the association inequality of \citet{harris1960} for the independent
vector $(\sigma_1,\ldots,\sigma_d)$ gives
$E\bigl[E(|w_i|\mid\sigma)\,E(|w_k|\mid\sigma)\bigr]\ge E|w_i|\,E|w_k|$. Combining the two
inequalities, $E|w_iw_k|\ge E|w_i|\,E|w_k|$, and therefore
\begin{equation*}
\operatorname{Cov}\bigl(|\tilde e_i|,|\tilde e_k|\bigr)\ge\{E[V^2]-(EV)^2\}\,E|w_i|\,E|w_k|
=\operatorname{Var}(V)\,E|w_i|\,E|w_k|,
\end{equation*}
which is strictly positive because $V$ is non-degenerate and each $w_i$ is non-degenerate ($M$ has
no zero row). Dependence of $(|\tilde e_i|,|\tilde e_k|)$ implies dependence of
$(\tilde e_i,\tilde e_k)$ and hence of the vector $\tilde e$, and strict positivity of the population dHSIC
follows from the characteristic-kernel property in Assumption~\ref{a:test}.
\end{proof}
Lemma~\ref{lem:scale} replaces, and proves, the informal statement that no linear unmixing removes
shared-scale dependence; note that it uses only first and second moments of the shocks, so it
applies to the $t_3$ constructions, whose fourth moments are infinite.

\begin{proposition}\label{prop:constr}
(i) Under Assumptions~\ref{a:spec}, \ref{a:rec} and \ref{a:test} there exist processes $D_{SP}$
and $D_{S\bar P}$ in $\mathcal D$ with $(S_0,P_0)=(1,1)$ and $(1,0)$, so the independence-support
set is not contained in the persistent set. (ii) If in addition Assumption~\ref{a:pseudo} holds,
there exist processes $D_{\bar SP}$ and $D_{\bar S\bar P}$ with $(S_0,P_0)=(0,1)$ and $(0,0)$, so
the persistent set is not contained in the independence-support set and the two sets are
non-nested as population objects. In both parts the operational classifiers attain the following
limits: as $T\to\infty$ with $\eta_T\downarrow0$ and $\sqrt T\,\eta_T\to\infty$,
$\Pr(\widehat S_T=1\mid S_0=1)\to1-\alpha$ and $\Pr(\widehat S_T=0\mid S_0=0)\to1$, while
$\Pr(\widehat P_T=1\mid P_0=1)\to1$ and $\Pr(\widehat P_T=0\mid P_0=0)\to1$.
\end{proposition}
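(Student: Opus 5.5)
The proof is constructive: I would exhibit four explicit members of $\mathcal D$ in the design of \eqref{eq:dgp}, fixing $\phi=0.3$ (any $|\phi|<1$ works) and standardized $t_3$ shocks. The two knobs are the edge strength $b$, which is either $0$ or a fixed nonzero value such as $b=0.9$, and the shared-scale loading $\lambda$. The four processes are:
\begin{itemize}
\item $D_{SP}$: $\lambda=0$, $b\neq0$;
\item $D_{S\bar P}$: $\lambda=0$, $b=0$;
\item $D_{\bar SP}$: $\lambda>0$, $b\neq0$;
\item $D_{\bar S\bar P}$: $\lambda>0$, $b=0$.
\end{itemize}
For $\lambda>0$ I take $(e^x_t,e^y_t)=V_t(u_{1t},u_{2t})$ with $V_t=\exp(\tfrac12\lambda z_t)$, where $z_t$ is i.i.d.\ standard normal and independent of $u_t$, and I rescale to unit variance. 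The first step is to check membership in $\mathcal D$. Stationarity follows from $|\phi|<1$. The variance is finite since $E V_t^2=e^{\lambda}<\infty$ and $t_3$ has finite variance. Uncorrelatedness follows from $E[e^xe^y]=E[V^2]E[u_1]E[u_2]=0$. Non-Gaussianity holds because $t_3$ is non-Gaussian and scale mixing preserves this.

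For the support axis, I would handle the two values of $\lambda$ separately.
\begin{itemize}
\item $\lambda=0$: the shocks are i.i.d.\ with independent non-Gaussian components, and Assumption~\ref{a:spec} holds by construction. Assumption~\ref{a:rec} then says the recovered shocks converge to the true independent shocks, so the population dHSIC of the limiting recovered shocks is zero and $S_0=1$.
\item $\lambda>0$: the limiting recovered shocks are $\tilde e=Me$ for the non-singular pseudo-true unmixing $M$ of Assumption~\ref{a:pseudo}. Lemma~\ref{lem:scale} with $d=2$ gives $\operatorname{Cov}(|\tilde e_1|,|\tilde e_2|)>0$, hence population dHSIC $>0$ and $S_0=0$.
\end{itemize}
Part (i) uses no assumption on the dependent processes, which is why Assumption~\ref{a:pseudo} enters only in part (ii).

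For the persistence axis, $P_0$ is defined through $\beta^\dagger$.
\begin{itemize}
\item $\lambda=0$: Assumption~\ref{a:rec} gives $\beta^\dagger=b$, so $P_0=1$ when $b\neq0$ and $P_0=0$ when $b=0$.
\item $\lambda>0$: Assumption~\ref{a:pseudo} gives $\beta^\dagger(0,\lambda)=0$, and at the selected $b\neq0$ the limit $\beta^\dagger$ is bounded away from zero with the sign of $b$. This yields $P_0=0$ and $P_0=1$ respectively.
\end{itemize}
This populates all four cells. Non-nestedness then follows: $D_{S\bar P}$ lies in the support set but not the persistent set, and $D_{\bar SP}$ lies in the persistent set but not the support set.

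For the operational limits, I would treat the two classifiers separately.
\begin{itemize}
\item Support, $S_0=1$: Assumption~\ref{a:test} gives asymptotic level $\alpha$ for the dHSIC on consistently recovered independent shocks, so $\Pr(\widehat S_T=1)\to1-\alpha$.
\item Support, $S_0=0$: the population dHSIC is strictly positive, and consistency against fixed alternatives gives $\Pr(\widehat S_T=0)\to1$.
\item Persistence, $P_0=1$: the bootstrap coefficient converges to $\beta^\dagger\neq0$ with the correct sign, and $\eta_T\downarrow0$ eventually lies below $|\beta^\dagger|/2$. So $P^*(|\hat b^*_T|\ge\eta_T)\to_p1$ and the sign share also tends to one. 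Both exceed $(\underline s,\underline c)$ with probability tending to one.
\item Persistence, $P_0=0$: the rate conditions required by Assumptions~\ref{a:rec} and \ref{a:pseudo} let me invoke Lemma~\ref{lem:weakedge}. Since $\sqrt T\eta_T\to\infty$, the stability share goes to zero in probability, which falls below $\underline s>0$.
\end{itemize}

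The main obstacle is the dependent-shock case of the operational claims. Assumption~\ref{a:test} states consistency against fixed alternatives only for the level property on i.i.d.\ independent shocks. Under $\lambda>0$, however, the test is applied to estimated shocks $\hat W u_t$, not to the limit $Me_t$. I would argue that $\hat W\to M$ in probability, so the sample dHSIC of the recovered shocks converges to the strictly positive population value from Lemma~\ref{lem:scale}. Meanwhile the block-permutation reference concentrates near zero at rate $O_p(T^{-1})$, which gives rejection with probability tending to one. If this cannot be derived cleanly, I would read consistency against fixed alternatives in Assumption~\ref{a:test} as covering consistently recovered pseudo-true shocks and state that interpretation explicitly. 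A lesser point is the leading-edge selection when $b\neq0$. In the bivariate acyclic model there is only one off-diagonal coefficient, so selection is trivial once the ordering is recovered, and the finite-candidate maximum remark in Lemma~\ref{lem:weakedge} covers the orientation choice.
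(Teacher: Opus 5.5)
Your proposal is correct and follows the paper's proof essentially step for step. It uses the same four $(\lambda,b)$ configurations, Assumption~\ref{a:rec} with Lemma~\ref{lem:weakedge} for the independent-shock row, and Lemma~\ref{lem:scale} with the non-singular pseudo-true limit of Assumption~\ref{a:pseudo} for the dependent-shock row; it also reads the consistency clause of Assumption~\ref{a:test} as covering the dependent case, as the paper does when it concludes $\Pr(\widehat S_T=0)\to1$. There is one harmless slip: $E[V_t^2]=e^{\lambda^2/2}$ rather than $e^{\lambda}$, which is still finite.
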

\begin{proof}
Two features of \eqref{eq:dgp} are chosen separately: shock dependence governs $S_0$ and the
limiting leading coefficient governs $P_0$. Let $u^x_t,u^y_t$ be i.i.d.\ standardized $t_3$.

Part (i): independent shocks. If $(e^x_t,e^y_t)=(u^x_t,u^y_t)$, the shocks are independent and
non-Gaussian; by Assumption~\ref{a:rec} the recovered shocks converge to them, the population
dHSIC is zero and by Assumption~\ref{a:test} the rejection probability tends to $\alpha$: $S_0=1$
and $\Pr(\widehat S_T=1)\to1-\alpha$. If $b\neq0$, then $\beta^\dagger=b\neq0$ and, by the
bootstrap clause of Assumption~\ref{a:rec}, the recovered leading coefficient and its bootstrap
analogue converge to that non-zero limit with fixed sign, so stability and sign consistency exceed
any
$(\underline s,\underline c)<(1,1)$ with probability tending to one: $P_0=1$, giving $D_{SP}$. If
$b=0$, then $\beta^\dagger=0$; the shocks remain non-Gaussian, so the process is distributionally
informative and Layer~2 passes, but the limiting decomposition has no instantaneous edge and, by
Lemma~\ref{lem:weakedge}, whose rate conditions hold by Assumption~\ref{a:rec}, the sample-selected
leading edge is declared not persistent with probability tending to one: $P_0=0$, giving
$D_{S\bar P}$.

Part (ii): dependent shocks. If instead $e^\bullet_t=u^\bullet_t\exp(\tfrac12\lambda z_t)$ with
$z_t$ i.i.d.\ standard normal independent of $(u^x,u^y)$ and $\lambda>0$, then $e^x_t$ and $e^y_t$
are uncorrelated but dependent through the shared scale. By Lemma~\ref{lem:scale} every
non-singular linear recovery leaves shocks with strictly positive population dHSIC; the recovery
map converges to a non-singular decomposition by Assumption~\ref{a:pseudo}, so $S_0=0$, and by the
consistency in Assumption~\ref{a:test} the rejection probability tends to one:
$\Pr(\widehat S_T=0)\to1$. If $b\neq0$ (at the selected value), $\beta^\dagger(b,\lambda)$ is
bounded away from zero with the sign of $b$ and the bootstrap reproduces this limit
(Assumption~\ref{a:pseudo}), so as in part (i) $P_0=1$, giving $D_{\bar SP}$. If $b=0$, then
$\beta^\dagger(0,\lambda)=0$ (Assumption~\ref{a:pseudo}); Layer~2 again passes, and by
Lemma~\ref{lem:weakedge}, whose rate conditions at $\beta^\dagger=0$ are part of
Assumption~\ref{a:pseudo}, the sample-selected leading edge is declared not persistent with
probability tending to one: $P_0=0$, giving $D_{\bar S\bar P}$.

The four parameter configurations $(\lambda=0,b\neq0)$, $(\lambda=0,b=0)$, $(\lambda>0,b\neq0)$
and $(\lambda>0,b=0)$ therefore populate the four cells. No claim is made that
shock dependence and edge strength act on the two classifiers independently in general; the
finite-sample evidence in Section~\ref{sec:mc} shows that shared volatility does raise bootstrap
noise and hence measured persistence. Existence of the two off-diagonal cells is all
non-nestedness requires; the classifier limits follow from the assumptions as stated. The result is an existence statement
about the population, not a distributional theorem about any finite sample.
\end{proof}
The division of labour between the assumptions is deliberate. Part (i), and with it the direction
that independence support does not imply persistence, uses only Assumptions~\ref{a:spec},
\ref{a:rec} and \ref{a:test}: correct specification, consistency and bootstrap regularity under
independent shocks, and the properties of the test.
Assumption~\ref{a:pseudo} is confined to part (ii), and within part (ii) Lemma~\ref{lem:scale}
settles the support axis for every non-singular limit, so what Assumption~\ref{a:pseudo}
contributes is exactly the persistence classification of the dependent-shock row: a true edge
survives pseudo-true recovery with its sign, no spurious edge is created at $b=0$, and the
bootstrap tracks this limit. Those pseudo-true properties are stated as an assumption. For the shared-scale
construction, Proposition~\ref{prop:sym} derives the exact population pseudo-true coefficient restrictions, and
Lemma~\ref{lem:consist} establishes consistency for an approximate global maximizer; the bootstrap regularity
conditions in Assumption~\ref{a:pseudo}, and convergence of the local deflationary FastICA algorithm, remain
outside these results. Table~\ref{tab:a3pop} checks the numerical implementation against the exact values, and
Table~\ref{tab:mc-main} documents the finite-sample counterparts. The Monte Carlo of
Section~\ref{sec:mc} instantiates exactly
these four processes ($\phi=0.3$, $t_3$ shocks, $b\in\{0,0.9\}$, $\lambda\in\{0,2\}$) and, for the
weak-edge grid, intermediate values of $b$.
\subsection*{Assumption A.3 on the shared-scale construction}
Assumption~\ref{a:pseudo} states, rather than derives, the properties of the pseudo-true
decomposition when the recovery pipeline is applied to dependent shocks. For the bivariate shared-scale
construction used in Proposition~\ref{prop:constr}(ii) and in Section~\ref{sec:mc}, namely $x=e_1$,
$y=b\,x+e_2$ and $e_i=h\,g_i$ with $h=\exp(\tfrac{\lambda}{2}z-\tfrac{\lambda^2}{4})$, $z\sim N(0,1)$
shared and $g_i$ i.i.d.\ standardized $t_3$ (so that $E[h^2]=1$, the shocks are uncorrelated and
$\lambda>0$ makes them dependent), these properties can be derived exactly. The covariance $\Sigma(b)$ is
analytic, so population whitening is exact, and, up to signs and component permutations, any orthogonal
unmixing in two dimensions can be represented by a rotation $R(\theta)$, $\theta\in[0,\pi/2)$. The population pipeline is the map
$\theta^\dagger=\arg\max_\theta\mathcal J(\theta)$ with
$\mathcal J(\theta)=\sum_{i=1}^{2}\bigl(E[G(w_i(\theta)'\tilde z)]-E[G(\nu)]\bigr)^2$, $G=\log\cosh$,
$\nu\sim N(0,1)$ and $\tilde z=\Sigma^{-1/2}(x,y)'$, followed by the diagonal-dominant row permutation,
row normalization, $B^\dagger=I-\widetilde W$ and $\beta^\dagger$ equal to the off-diagonal entry of
$B^\dagger$ of larger magnitude, the decision sequence of the sample pipeline.
\begin{proposition}\label{prop:sym}
(Exact pseudo-true recovery on the shared-scale construction.) For every finite $\lambda\ge0$: (i) at $b=0$,
$\mathcal J(\theta)<\mathcal J(0)$ for every $\theta\notin(\pi/2)\mathbb Z$, so the axis-aligned orbit is the
unique maximizing orbit and $\beta^\dagger(0,\lambda)=0$; (ii) for $|b|<2$ the population maximizer recovers the
axis-aligned source coordinates, $\widetilde W=A_b^{-1}$ with $A_b=\bigl(\begin{smallmatrix}1&0\\ b&1\end{smallmatrix}\bigr)$,
the diagonal-dominance decision is strict with margin exactly $2-|b|$, and $\beta^\dagger(b,\lambda)=b$. With
Gaussian $g_i$ in place of $t_3$ the strict inequality in (i) fails and $\mathcal J$ is constant in $\theta$.
\end{proposition}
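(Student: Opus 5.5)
The plan is to reduce the population contrast to a one-dimensional concavity statement about the Gaussian smoothing of $G=\log\cosh$, exploiting that both $g_i$ and the shared factor $h$ enter only through variances.

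\emph{Step 1 (reduction to an orthogonal mixture).} Since $E[h^2]=1$ and the $g_i$ are standardized and independent of $h$, $\operatorname{Var}(e)=I$. Hence $(x,y)'=A_be$ and $\Sigma(b)=A_bA_b'$. It follows that $\tilde z=\Sigma^{-1/2}A_b e=O_b\,e$ with $O_b=\Sigma^{-1/2}A_b$ orthogonal, and $O_0=I$. For a rotation row $w_i(\theta)$, $w_i(\theta)'\tilde z=h\,(r_i'g)$, where $r_i=O_b'w_i(\theta)$ is a unit vector. The two rows give orthogonal unit vectors. Therefore $\mathcal J$ depends on $\theta$ only through the angle $\psi$ of $r_1=(\cos\psi,\sin\psi)'$ relative to the source axes, and each term involves only a marginal. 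This is why the shared-scale dependence between the two components does not enter.

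\emph{Step 2 (concavity in the variance).} Write $g_j=\sigma_j\xi_j$ with $\xi_j$ standard normal and $\sigma_j^2$ inverse gamma, all independent; this is the scale-mixture representation already used in Lemma~\ref{lem:scale}. Conditional on $(h,\sigma)$, $h\,r'g$ is $N\bigl(0,\,h^2(c^2\sigma_1^2+s^2\sigma_2^2)\bigr)$. Thus $E[G(h\,r'g)]=E\bigl[\varphi\bigl(h^2(c^2\sigma_1^2+s^2\sigma_2^2)\bigr)\bigr]$ with $\varphi(v)=E[G(\sqrt v\,\nu)]$. By the heat-equation identity $\varphi'(v)=\tfrac12E[G''(\sqrt v\nu)]=\tfrac12E[\operatorname{sech}^2(\sqrt v\nu)]$. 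This is strictly decreasing in $v$ because $\operatorname{sech}^2$ is symmetric, non-constant and decreasing in $|x|$. So $\varphi$ is strictly concave.

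\emph{Step 3 (the sign of the deviation and the maximum at the axes).} Two consequences of strict concavity are needed.
\begin{itemize}
\item Sign. By Jensen and $E[h^2\sigma^2\text{-combination}]=1$, every $E[G(w_i'\tilde z)]\le\varphi(1)=E[G(\nu)]$. So each deviation inside $\mathcal J$ is nonpositive, and maximizing its square means minimizing $E[G]$.
\item Location of the minimum. Strict concavity gives pointwise $\varphi(h^2(c^2\sigma_1^2+s^2\sigma_2^2))\ge c^2\varphi(h^2\sigma_1^2)+s^2\varphi(h^2\sigma_2^2)$. The inequality is strict on a set of positive probability whenever $cs\neq0$, because $\sigma_1^2\neq\sigma_2^2$ with positive probability for non-degenerate inverse-gamma scales.
\end{itemize}
Taking expectations and using $\sigma_1\overset d=\sigma_2$, each $E[G]$ is strictly larger off the axes than on them. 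Hence each squared deviation is strictly smaller off the axes, so $\mathcal J(\psi)<\mathcal J(0)$ for $\psi\notin(\pi/2)\mathbb Z$. At $b=0$ we have $\psi=\theta$, which gives (i). The axis solution is $\widetilde W=I$, so $B^\dagger=0$ and $\beta^\dagger(0,\lambda)=0$. In the Gaussian case $\sigma_j\equiv1$, the mixture variance is $h^2$ for every $\psi$, and $\mathcal J$ is constant, which is the final sentence of the statement.

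\emph{Step 4 (general $b$).} By Step 1 the maximizing orbit is the set of rotations whose rows are $\pm$ the rows of $O_b'$, up to permutation. The resulting unmixing is $O_b'\Sigma^{-1/2}=A_b^{-1}=\bigl(\begin{smallmatrix}1&0\\-b&1\end{smallmatrix}\bigr)$ up to signs and row order. I would then verify the pipeline's decision rule on this matrix:
\begin{itemize}
\item the identity ordering has absolute diagonal sum $2$;
\item the swapped ordering has absolute diagonal sum $|0|+|{-b}|=|b|$.
\end{itemize}
So the diagonal-dominant choice is strict with margin $2-|b|$ exactly when $|b|<2$. Row normalization is trivial since the diagonal is already $1$, and then $B^\dagger=I-A_b^{-1}$ has off-diagonal entry $b$, giving $\beta^\dagger(b,\lambda)=b$.

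\emph{Main obstacle.} I expect the hardest step to be making Step 3 fully rigorous. The points to secure are:
\begin{itemize}
\item justifying differentiation under the expectation for $\varphi$;
\item establishing strict monotonicity of $v\mapsto E[\operatorname{sech}^2(\sqrt v\nu)]$;
\item confirming that the mixing over $h$, which is common to both terms, preserves strictness.
\end{itemize}
For the last point I would argue conditionally on $h$ and then integrate. I would also need to check that the exact decision sequence (sign conventions, and the permutation criterion stated as a sum of absolute diagonals) matches the sample pipeline.
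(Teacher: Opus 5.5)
Your proposal is correct and follows the paper's proof essentially step for step. The shared ingredients are the Gaussian scale-mixture representation, strict concavity of $v\mapsto E[G(\sqrt v\,\nu)]$, Jensen for the sign of each deviation, strictness from $P(\sigma_1^2\neq\sigma_2^2)>0$, the reduction $\tilde z=O_be$ for general $b$, and the $2$ versus $|b|$ diagonal comparison. The only variation is how you get strict concavity: you use the heat-equation identity $\varphi'(v)=\tfrac12E[\mathrm{sech}^2(\sqrt v\,\nu)]$, whereas the paper shows that $t\mapsto\log\cosh\sqrt t$ is strictly concave pointwise, which sidesteps the differentiation-under-the-expectation issue you flag.
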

\begin{proof}
Write $g_i=\sqrt{S_i}\,\nu_i$ with $\nu_1,\nu_2$ i.i.d.\ $N(0,1)$ and $S_i=1/U_i$, $U_i$ i.i.d.\ $\chi^2_3$, all
independent of each other and of $z$; then $E[S_i]=1$, $S_i$ is non-degenerate, and $e_i=h\sqrt{S_i}\,\nu_i$
with $E[h^2]=1$. (i) At $b=0$, $\Sigma=I$ and $\tilde z=e$. Let $q(t)=\log\cosh\sqrt t$ for $t\ge0$. For $t>0$,
$q''(t)=\bigl(\sqrt t\,\mathrm{sech}^2\sqrt t-\tanh\sqrt t\bigr)/(4t^{3/2})<0$, because $x\mapsto\tanh
x-x\,\mathrm{sech}^2x$ vanishes at zero and has derivative $2x\,\mathrm{sech}^2x\tanh x>0$ for $x>0$; hence
$F(v)=E[G(\sqrt v\,\nu)]=E[q(v\nu^2)]$, $\nu\sim N(0,1)$, is strictly concave on $[0,\infty)$ (all expectations
are finite because $0\le G(x)\le|x|$). Put $a=\cos^2\theta$. Conditionally on $(h,S_1,S_2)$, $w_1(\theta)'e$ is
centred normal with variance $h^2\{aS_1+(1-a)S_2\}$, so
$E[G(w_1(\theta)'e)]=E\bigl[F\bigl(h^2\{aS_1+(1-a)S_2\}\bigr)\bigr]=:A(a)$, and by exchangeability
$E[G(w_2(\theta)'e)]=A(1-a)=A(a)$. For $0<a<1$, strict concavity and $P(S_1\ne S_2)>0$ give
$A(a)>aE[F(h^2S_1)]+(1-a)E[F(h^2S_2)]=A(1)$, while Jensen's inequality and $E[h^2\{aS_1+(1-a)S_2\}]=1$ give
$A(a)\le F(1)=E[G(\nu)]=:C$. Hence $0\le C-A(a)<C-A(1)$ and
$\mathcal J(\theta)=2\{C-A(a)\}^2<2\{C-A(1)\}^2=\mathcal J(0)$ whenever $a\notin\{0,1\}$, that is, whenever
$\theta\notin(\pi/2)\mathbb Z$. At an axis-aligned maximizer the unmixing returns $\pm e_1,\pm e_2$ in some
order, the diagonal-dominance decision compares $2$ with $0$, row normalization gives $\widetilde W=I$ and
$B^\dagger=0$, so $\beta^\dagger(0,\lambda)=0$. If $g_i$ were Gaussian, $S_i\equiv1$, every inequality above is
an equality and $\mathcal J$ is constant. (ii) For general $b$, $(x,y)'=A_be$ with $\Sigma(b)=A_bA_b'$, so
$\tilde z=\Sigma^{-1/2}A_be=O_be$ with $O_b=\Sigma^{-1/2}A_b$ orthogonal, and the contrast of a rotation of
$\tilde z$ is the contrast of a rotation of $e$. By (i), applied to each unit direction, the contrast over
orthonormal pairs is maximized exactly when both directions are axis-aligned in $e$, so every maximizing
unmixing of $(x,y)'$ is $W^\dagger=DPA_b^{-1}$ with
$A_b^{-1}=\bigl(\begin{smallmatrix}1&0\\-b&1\end{smallmatrix}\bigr)$, $P$ a permutation and $D$ a sign matrix.
The absolute-diagonal sum of the correct arrangement is $2$ and that of the alternative is $|b|$, so for
$|b|<2$ the decision is strict with margin $2-|b|$; row normalization then gives $\widetilde W=A_b^{-1}$,
$B^\dagger=I-\widetilde W=\bigl(\begin{smallmatrix}0&0\\ b&0\end{smallmatrix}\bigr)$ and
$\beta^\dagger(b,\lambda)=b$. Nothing in (i)--(ii) depends on $\lambda$ beyond $E[h^2]=1$.
\end{proof}
The proposition is a statement about the population contrast with the $\log\cosh$ non-linearity maximized
globally over rotations, the object the population pipeline computes. It says nothing about the sample
algorithm: the Monte Carlo pipeline (\OA{C}) runs a deflationary FastICA, a local procedure whose initialization
and stopping rule are not part of the proof, and nothing about the bootstrap regularity that
Assumption~\ref{a:pseudo} presumes, which remains a separate assumption.
\begin{lemma}\label{lem:consist}
Let $\hat{\mathcal J}_N$ be the sample contrast on $N$ i.i.d.\ population-whitened draws and
$M_N=N^{-1}\sum_j\|\tilde z_j\|$. Because $|G(u)-G(v)|\le|u-v|$ and $0\le G(x)\le|x|$, the sample $\log\cosh$
means are $M_N$-Lipschitz in $\theta$ and bounded by $M_N$, so $\hat{\mathcal J}_N$ is Lipschitz on $[0,\pi/2]$
with constant at most $4M_N(M_N+C)=O_p(1)$, since $M_N\to_pE\|\tilde z\|<\infty$; a finite net and the
pointwise law of large numbers give $\sup_\theta|\hat{\mathcal J}_N(\theta)-\mathcal J(\theta)|\to_p0$. By
Proposition~\ref{prop:sym} and compactness, for every $\varepsilon>0$,
$\sup\{\mathcal J(\theta):\operatorname{dist}(\theta,\Theta^\dagger)\ge\varepsilon\}<\max_\theta\mathcal J(\theta)$,
where $\Theta^\dagger$ is the maximizing orbit (a positive gap from every point outside the orbit is not
available for a continuous contrast; this separation is the usable condition). Hence any approximate
maximizer, $\hat{\mathcal J}_N(\hat\theta_N)\ge\sup_\theta\hat{\mathcal J}_N(\theta)-o_p(1)$, satisfies
$\operatorname{dist}(\hat\theta_N,\Theta^\dagger)\to_p0$; because the permutation decision on $\Theta^\dagger$ is
strict, with margin $2-|b|$, the sample decision equals the population decision with probability tending to
one and $\hat\beta_N\to_p\beta^\dagger$. Empirical whitening adds a term bounded by
$\|\hat\Sigma^{-1/2}-\Sigma^{-1/2}\|\,N^{-1}\sum_j\|(x_j,y_j)'\|=o_p(1)$. This is the consistency of the recovery
decision that Assumption~\ref{a:pseudo} presumes for the construction, for an estimator that approximately
maximizes the contrast globally; a local algorithm requires that bridge, and the bootstrap consistency in
Assumption~\ref{a:pseudo} is not a consequence of this consistency.
\end{lemma}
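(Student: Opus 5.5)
The statement to be proved is Lemma~\ref{lem:consist}. It has the form of a standard argmax-consistency result, and I would prove it in three steps: a uniform law of large numbers for the contrast, identification via the well-separated maximizer supplied by Proposition~\ref{prop:sym}, and transfer of consistency through the discrete permutation and normalization decision.

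\textbf{Step 1: uniform convergence.} For fixed $\theta$, write $\hat m_i(\theta)=N^{-1}\sum_jG(w_i(\theta)'\tilde z_j)$. Because $G=\log\cosh$ is $1$-Lipschitz and $\|w_i(\theta)-w_i(\theta')\|\le|\theta-\theta'|$ for rotation rows, we get $|\hat m_i(\theta)-\hat m_i(\theta')|\le M_N|\theta-\theta'|$. Since $0\le G(x)\le|x|$, we also have $0\le\hat m_i\le M_N$. Write $C=E[G(\nu)]$. Then $(\hat m_i-C)^2$ is Lipschitz with constant at most $2(M_N+C)M_N$, and summing over $i=1,2$ gives the constant $4M_N(M_N+C)$. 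This is $O_p(1)$ because $E\|\tilde z\|<\infty$, which holds since the $t_3$ shocks have finite variance and $E[h^2]=1$; the weak LLN then applies to $M_N$. The population contrast $\mathcal J$ is Lipschitz by the same argument with expectations in place of sample means. Pointwise, $\hat{\mathcal J}_N(\theta)\to_p\mathcal J(\theta)$ by the LLN, since $E|G(w'\tilde z)|\le E\|\tilde z\|<\infty$. Take a $\delta$-net of $[0,\pi/2]$. The sup deviation is then at most the maximum deviation over the finite net plus $\delta$ times the two Lipschitz constants. Letting $N\to\infty$ and then $\delta\to0$ gives $\sup_\theta|\hat{\mathcal J}_N-\mathcal J|\to_p0$.

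\textbf{Step 2: well-separated maximum.} By Proposition~\ref{prop:sym}(i), transported to general $b$ via the orthogonal $O_b$ in part (ii), $\mathcal J$ attains its maximum exactly on the orbit $\Theta^\dagger$. The set $\{\theta:\operatorname{dist}(\theta,\Theta^\dagger)\ge\varepsilon\}$ is compact and $\mathcal J$ is continuous on it, so its supremum is attained. That supremum is therefore strictly below the maximum; call the difference $\kappa_\varepsilon>0$. With this separation, the standard argument applies. On the event $\sup|\hat{\mathcal J}_N-\mathcal J|<\kappa_\varepsilon/3$, and with the optimization slack below $\kappa_\varepsilon/3$, any approximate maximizer has $\mathcal J(\hat\theta_N)>\max\mathcal J-\kappa_\varepsilon$. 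It must therefore lie within $\varepsilon$ of $\Theta^\dagger$, which gives $\operatorname{dist}(\hat\theta_N,\Theta^\dagger)\to_p0$.

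\textbf{Step 3: decision and coefficient.} On $\Theta^\dagger$ the absolute-diagonal sums of the two arrangements are $2$ and $|b|$, with margin $2-|b|>0$. These sums are continuous in $\theta$ through $W(\theta)=R(\theta)\Sigma^{-1/2}$. Hence for $\hat\theta_N$ near the orbit the sample picks the same permutation with probability tending to one. Row normalization divides by diagonal entries bounded away from zero near the orbit, so it is continuous there. The continuous mapping theorem then gives $\hat\beta_N\to_p\beta^\dagger=b$.

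\textbf{Empirical whitening.} Replacing $\Sigma^{-1/2}$ by $\hat\Sigma^{-1/2}$ perturbs each $G$-average by at most $\|\hat\Sigma^{-1/2}-\Sigma^{-1/2}\|N^{-1}\sum_j\|(x_j,y_j)'\|$, again using the $1$-Lipschitz property of $G$. This is $o_p(1)\cdot O_p(1)$, by consistency of $\hat\Sigma$ and continuity of the inverse square root at a positive-definite limit. Because the bound is uniform in $\theta$, Step 1 is unaffected, and the same perturbation enters continuously in Step 3.

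\textbf{Main obstacle.} I expect the hardest step to be the separation claim in Step 2. It must be stated as a gap outside $\varepsilon$-neighbourhoods rather than a pointwise gap. One also has to check that the orbit structure modulo $\pi/2$ (signs and permutations) is handled consistently on $[0,\pi/2]$, treating the endpoints $0$ and $\pi/2$ as the same orbit.
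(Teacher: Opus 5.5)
Your proposal is correct and follows the paper's route: the $4M_N(M_N+C)$ Lipschitz bound with a finite net gives the uniform law of large numbers, and Proposition~\ref{prop:sym} with compactness gives the separation $\sup\{\mathcal J(\theta):\operatorname{dist}(\theta,\Theta^\dagger)\ge\varepsilon\}<\max_\theta\mathcal J(\theta)$; the strict permutation margin $2-|b|$ and the whitening bound $\|\hat\Sigma^{-1/2}-\Sigma^{-1/2}\|\,N^{-1}\sum_j\|(x_j,y_j)'\|$ are the paper's as well. Your explicit $\kappa_\varepsilon/3$ argument and the continuity of row normalization near the orbit just spell out steps the paper leaves implicit.
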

Proposition~\ref{prop:sym} also fixes the population angle in the symmetric-whitening convention of the
pipeline, $\theta^\dagger=\arctan(b/2)$ modulo the axis orbit: $8.5308^{\circ}$ at $b=0.3$ and
$24.2277^{\circ}$ at $b=0.9$. Table~\ref{tab:a3pop} is a check of the numerical implementation against these
exact values, not a certificate: $K=8$ independent batches of $N=600{,}000$ draws per $(b,\lambda)$, each with
analytic whitening and a $1^{\circ}$ grid refined to $0.02^{\circ}$ around the maximum, reported as batch
means with standard errors. At $b=0.3$ and $b=0.9$ every entry ($\theta^\dagger$, $\beta^\dagger$ and the
margin) agrees with the exact value within two standard errors. At $b=0$, where the maximizer is identified
only up to the choice between the two axes and each batch locates it to grid precision, $\beta^\dagger$ is
$-0.0007$ (standard error $0.0006$) against $0$ and the margin is $1.997$ ($0.001$) against $2$, the small
shortfall being the finite-grid error of the located angle. The contrast-gap column records that the
maximum exceeds the contrast at every grid point more than $5^{\circ}$ from the maximizing orbit, the
finite-grid shadow of the strict inequality in the proposition. The result sharpens the assumption for
this construction: a shared scale destroys independence and inflates bootstrap noise but does not displace
the coefficient that the filter screens, $\beta^\dagger(b,\lambda)=b$ for every $|b|<2$ and every finite
$\lambda$. That is the population counterpart of the persistent-without-support cell: a real-sized,
persistent edge coexisting with dependent shocks. Not established here are the behaviour of other contrasts
or of the sample FastICA as an algorithm, and the sample-selected leading edge in systems with more than
two variables.

\begin{table}[t]
\caption{\label{tab:a3pop}Population recovery pipeline on the shared-scale construction (batch means, standard errors in parentheses).}
\begin{center}
\small
\begin{tabular*}{\textwidth}{@{}l@{\extracolsep{\fill}}ccccc@{}}
\hline\hline
$b$ & $\lambda$ & $\theta^\dagger$ (degrees) & $\beta^\dagger$ & Permutation margin & Contrast gap \\
\hline
0.9 & 2.0 & 24.25 (0.05) & 0.9003 (0.0009) & 1.097 (0.002) & $4.9\times10^{-4}$ ($4\times10^{-6}$) \\
0.9 & 0.0 & 24.21 (0.02) & 0.8997 (0.0004) & 1.100 (0.001) & $3.6\times10^{-4}$ ($3\times10^{-6}$) \\
0.3 & 2.0 & 8.54 (0.06)  & 0.3001 (0.0010) & 1.698 (0.002) & $5.1\times10^{-4}$ ($1\times10^{-5}$) \\
0.0 & 2.0 & axis-aligned$^{\ast}$ & $-0.0007$ (0.0006) & 1.997 (0.001) & $5.4\times10^{-4}$ ($2\times10^{-5}$) \\
\hline\hline
\end{tabular*}
\end{center}
\footnotesize
\renewcommand{\baselineskip}{11pt}
\textbf{Note:} $K=8$ batches of $N=600{,}000$ draws; $E[G(\nu)]=0.374567$. $^{\ast}$At $b=0$ the maximizer orbit contains both axes, so batch maximizers fall near $0^{\circ}$ or near $90^{\circ}$ and the circular mean of $\theta$ is uninformative, while every batch is axis-aligned and returns $\beta^\dagger\approx0$, as Proposition~\ref{prop:sym} states. The contrast gap is the maximum minus the largest contrast value at grid points more than $5^{\circ}$ (modulo $\pi/2$) from the maximizer.
\end{table}

\newpage
\section*{Supplementary appendices}
\noindent Appendices B--G contain material that is supplementary to the main text: a simulation on
the complementarity of HSIC and tail co-exceedance (B), the free-versus-block permutation
comparison on the TEP shocks (C), the full Monte Carlo tables behind Table~\ref{tab:mc-main},
including the earlier Gaussian-witness design (D), additional TEP figures (E), the Compustat sample flow with the common-sample lag comparison (F), and the sensitivity of the TEP results to the near-collinear innovation pair (G).

\section{Complementarity of HSIC and tail co-exceedance}
\label{app:mc}
Is the tail diagnostic redundant given HSIC? We simulate two bivariate designs with $1{,}000$
replications each. In the null design $X$ and $Y$ are independent standard normal. In the
tail-dependent design $X$ and $Y$ are independent through the body, but about $10\%$ of
observations are forced to be jointly extreme in the upper tail, so joint exceedances occur while
the bulk of the sample is unrelated. For each design we compute the HSIC statistic with a free
permutation reference ($199$ permutations; the data are i.i.d.\ here) and the upper-tail
co-exceedance $\widehat\chi(0.90)$ (Table~\ref{tab:mc}).

\begin{table}[t]
\caption{\label{tab:mc}HSIC rejection rate and mean tail co-exceedance.}
\begin{center}
\begin{tabular*}{0.85\textwidth}{@{}l@{\extracolsep{\fill}}lcc@{}}
\hline\hline
$n$ & Design & HSIC rejection rate ($10\%$) & Mean $\widehat\chi(0.90)$ \\
\hline
$40$ & Independent                 & 0.093 & 0.106 \\
$40$ & Tail-dependent, body-clean  & 0.766 & 0.751 \\
$80$ & Independent                 & 0.092 & 0.093 \\
$80$ & Tail-dependent, body-clean  & 0.958 & 0.788 \\
\hline\hline
\end{tabular*}
\end{center}
\footnotesize
\renewcommand{\baselineskip}{11pt}
\textbf{Note:} $1{,}000$ replications per row. The independence benchmark for
$\widehat\chi(0.90)$ is $0.10$.
\end{table}

At $n=40$, the Compustat regime, HSIC misses the tail dependence in about one quarter of
replications while tail co-exceedance registers values seven times the independence benchmark.
At $n=80$ HSIC regains most of its power. The two diagnostics are complementary: HSIC governs
general dependence and gains power with sample size; tail co-exceedance targets the
extreme-dependence regime that is least visible to a body-of-distribution measure in short series.

\section{Free versus block permutation on the TEP shocks}
\label{app:block}
On the $14$ recovered TEP shocks ($T=493$, block length $\ell=10$, $999$
permutations), the free-permutation null of the joint dHSIC statistic is far tighter than the
block-permutation null: null standard deviations of $0.0000068$ (free) versus $0.00020$ (block),
roughly a thirtyfold difference, with null means of $0.0020$ (free) and $0.0023$ (block). The corresponding
block-to-free standard-deviation ratios are $24.67$ and $18.55$ in the two $13$-variable ablations of
Appendix~\ref{app:tepsens}; the free-permutation reference is narrower in all three specifications examined. This comparison
illustrates the sensitivity of the reference distribution to the resampling scheme on serially
dependent recovered shocks; it does not by itself establish correct null calibration for either
scheme. Size under a known serially dependent null is examined in Appendix~\ref{app:mcfull}, and
block permutation is in any case not an exact test on recovered shocks (\MT{main text, Section~8}{Section~\ref{sec:disc}}).

\clearpage
\suppressfloats[t]
\section{Full Monte Carlo tables}
\label{app:mcfull}
All designs use the bivariate SVAR(1) of \MT{Appendix~A of the main text}{Appendix~\ref{app:nn}} with $\phi=0.3$ and a burn-in
of $100$ observations. Standardized $t_3$ shocks are generated as $z/\sqrt{g}$ with $z$ standard
normal and $g\sim\Gamma(3/2,2/3)$, rescaled to unit variance. Shared stochastic volatility
multiplies both shocks by $\exp(\tfrac12\lambda z_t)$ with $z_t$ standard normal, normalized to unit
mean square. The pipeline in each replication is OLS VAR(1) with intercept; FastICA (tanh
non-linearity, deflation) on the residuals; the permutation of the unmixing matrix that maximizes
the absolute diagonal, row-normalized to give $\hat B=I-\hat W$; block-permutation dHSIC with
Gaussian kernels (median heuristic), block length $\ell=\lfloor\sqrt T\rfloor$, $499$
permutations, support declared at $p>0.10$; moving-block bootstrap with block length
$\lfloor\sqrt T\rfloor$ and $200$ resamples, in each of which the leading edge selected on the
original sample is re-estimated. Persistence for several cutoffs $\eta$ is evaluated from the same
$200$ bootstrap coefficients.

\subsection{Size and power of the support diagnostic}
Table~\ref{tab:sp} reports the rejection rate of the block-permutation dHSIC classifier on recovered shocks at
$T\in\{50,100,250\}$ for independent shocks ($\lambda=0$, size) and for moderate and strong shared volatility
($\lambda\in\{1,2\}$, power).
\begin{table}[t]
\caption{\label{tab:sp}Rejection rate of the block-permutation dHSIC diagnostic on recovered shocks.}
\begin{center}
\begin{tabular*}{0.85\textwidth}{@{}l@{\extracolsep{\fill}}ccc@{}}
\hline\hline
 & $T=50$ & $T=100$ & $T=250$ \\
\hline
$\lambda=0$ (independent shocks; size) & 0.069 & 0.068 & 0.105 \\
$\lambda=1$ (moderate shared volatility) & 0.181 & 0.367 & 0.652 \\
$\lambda=2$ (strong shared volatility)   & 0.680 & 0.945 & 1.000 \\
\hline\hline
\end{tabular*}
\end{center}
\footnotesize
\renewcommand{\baselineskip}{11pt}
\textbf{Note:} Nominal level $10\%$; $1{,}000$ replications per cell; $b=0.9$; Monte Carlo
standard error at most $0.016$.
\end{table}

\subsection{Size and power under serially dependent shocks}
\label{app:serialdep}
The designs above use shocks that are independent over time, so they do not stress the serial
dependence that motivates block permutation. Here the shocks carry persistent log-volatilities:
$e_{it}=\exp(h_{it}/2)\,u_{it}$ with $u_{it}$ i.i.d.\ standardized $t_3$ and
$h_{it}=0.8\,h_{i,t-1}+0.6\,\nu_{it}$, $\nu_{it}$ i.i.d.\ standard normal, so each $h_{it}$ is
stationary with unit variance. Under the null the two volatility processes are independent across
shocks, so the shocks are cross-sectionally independent but serially dependent; under the
alternative a single common volatility process multiplies both shocks. The rest of the design is
unchanged ($\phi=0.3$, $b=0.9$, full recovery-and-diagnosis pipeline in every replication, $499$
permutations, $\ell=\max(3,\lfloor\sqrt T\rfloor)$ for the block permutation and $\ell=1$ for the
free permutation, $\alpha=0.10$, $500$ replications per cell; the replication bundle regenerates
Table~\ref{tab:serialdep}).

\begin{table}[t]
\caption{\label{tab:serialdep}Rejection rates on recovered shocks under serially dependent
shocks: block versus free permutation.}
\begin{center}
\small
\begin{tabular*}{\textwidth}{@{}l@{\extracolsep{\fill}}cccccc@{}}
\hline\hline
 & \multicolumn{3}{c}{\shortstack{Null: independent\\ persistent volatilities}}
 & \multicolumn{3}{c}{\shortstack{Alternative: common\\ persistent volatility}} \\
$T$ & 50 & 100 & 250 & 50 & 100 & 250 \\
\hline
Block permutation & 0.088 & 0.102 & 0.134 & 0.158 & 0.324 & 0.652 \\
Free permutation  & 0.092 & 0.120 & 0.138 & 0.166 & 0.332 & 0.668 \\
\hline\hline
\end{tabular*}
\end{center}
\footnotesize
\renewcommand{\baselineskip}{11pt}
\textbf{Note:} Rejection rates of the dHSIC diagnostic at $\alpha=0.10$ on recovered shocks,
$500$ replications per cell (Monte Carlo standard error at most $0.021$). Null: shocks
cross-sectionally independent with independent persistent log-volatilities. Alternative: common
persistent log-volatility.
\end{table}

Two readings follow. First, at the sample lengths relevant for the Compustat application the
block-permutation diagnostic keeps approximately nominal size under a serially dependent null
($0.088$ at $T=50$, $0.102$ at $T=100$) while retaining power against the common-volatility
alternative ($0.158$ rising to $0.652$ at $T=250$). Second, at $T=250$ both schemes mildly
exceed nominal size ($0.134$ block, $0.138$ free): in this design the serial dependence sits in
second moments only, so the two schemes differ little here, in contrast with the recovered
TEP shocks of Appendix~\ref{app:block}, where the free-permutation reference standard deviation
is about thirty times smaller than that of the block-permutation reference; and the residual
drift under block permutation is consistent with
the caveat of \MT{Section~8 of the main text}{Section~\ref{sec:disc}} that the diagnostic is not
an exact test on shocks that carry dynamic-estimation and structural-recovery error.

\subsection{Weak-edge grid}
Table~\ref{tab:we} varies the true edge strength $b$ at $T=250$ with independent shocks and reports, side by side,
the probability that the leading edge is declared persistent at four cutoffs, the probability that independence is
supported, and the probability of the supported-but-not-persistent cell.
\begin{table}[t]
\caption{\label{tab:we}Weak-edge grid: persistence and support as the true edge strength varies.}
\begin{center}
\small
\begin{tabular*}{\textwidth}{@{}l@{\extracolsep{\fill}}cccccc@{}}
\hline\hline
$b$ & $0$ & $0.02$ & $0.05$ & $0.10$ & $0.20$ & $0.50$ \\
\hline
$\Pr(\widehat P=1)$, $\eta=0.02$ & 0.466 & 0.512 & 0.592 & 0.798 & 0.918 & 0.984 \\
$\Pr(\widehat P=1)$, $\eta=0.05$ & 0.316 & 0.372 & 0.438 & 0.694 & 0.922 & 0.986 \\
$\Pr(\widehat P=1)$, $\eta=0.10$ & 0.096 & 0.092 & 0.138 & 0.340 & 0.846 & 0.984 \\
$\Pr(\widehat P=1)$, $\eta=0.20$ & 0.008 & 0.006 & 0.010 & 0.028 & 0.328 & 0.980 \\
$\Pr(\widehat S=1)$ & 0.912 & 0.910 & 0.920 & 0.906 & 0.896 & 0.918 \\
$\Pr(\widehat S=1,\widehat P=0)$, $\eta=0.02$ & 0.466 & 0.428 & 0.360 & 0.164 & 0.064 & 0.002 \\
$\Pr(\widehat S=1,\widehat P=0)$, $\eta=0.20$ & 0.906 & 0.906 & 0.910 & 0.880 & 0.608 & 0.006 \\
\hline\hline
\end{tabular*}
\end{center}
\footnotesize
\renewcommand{\baselineskip}{11pt}
\textbf{Note:} Independent $t_3$ shocks, $T=250$, $500$ replications per column (Monte Carlo
standard error at most $0.022$). $\widehat P$: leading edge persistent; $\widehat S$: independence
supported. The median bootstrap standard deviation of the leading coefficient is $0.11$ at every
$b$.
\end{table}

\subsection{Confusion matrices for all cutoffs (non-Gaussian witnesses)}
Designed cells: support and persistent ($\lambda=0,b=0.9$); support and not persistent
($\lambda=0,b=0$); no support and persistent ($\lambda=2,b=0.9$); no support and not persistent
($\lambda=2,b=0$). All shocks are $t_3$; $T=250$; $500$ replications per row. Table~\ref{tab:conf}
reports the protocol assignment for each magnitude cutoff $\eta$. Support-axis accuracy is $0.945$
for every $\eta$, because the support classifier does not depend on $\eta$; persistence-axis
accuracy rises from $0.725$ ($\eta=0.02$) to $0.762$, $0.848$ and $0.916$ ($\eta=0.05$, $0.10$,
$0.20$). Median bootstrap standard deviations of the leading coefficient are $0.16$ (support and
persistent), $0.12$ (support and not persistent), $0.42$ (no support and persistent) and $0.26$ (no
support and not persistent): shared stochastic volatility roughly doubles bootstrap noise, which is
why the no-support, not-persistent cell retains $78$ false persistence calls even at $\eta=0.20$.

\begin{table}[t]
\caption{\label{tab:conf}Protocol assignment by magnitude cutoff.}
\begin{center}
\small
\begin{tabular*}{\textwidth}{@{}l@{\extracolsep{\fill}}lcccc@{}}
\hline\hline
$\eta$ & Designed cell & S\,\&\,P & S\,\&\,$\bar{\mathrm P}$ & $\bar{\mathrm S}$\,\&\,P & $\bar{\mathrm S}$\,\&\,$\bar{\mathrm P}$ \\
\hline
$0.02$ & S\,\&\,P & 445 & 3 & 48 & 4 \\
       & S\,\&\,$\bar{\mathrm P}$ & 224 & 218 & 25 & 33 \\
       & $\bar{\mathrm S}$\,\&\,P & 0 & 0 & 419 & 81 \\
       & $\bar{\mathrm S}$\,\&\,$\bar{\mathrm P}$ & 0 & 0 & 214 & 286 \\
\hline
$0.05$ & S\,\&\,P & 445 & 3 & 48 & 4 \\
       & S\,\&\,$\bar{\mathrm P}$ & 153 & 289 & 19 & 39 \\
       & $\bar{\mathrm S}$\,\&\,P & 0 & 0 & 423 & 77 \\
       & $\bar{\mathrm S}$\,\&\,$\bar{\mathrm P}$ & 0 & 0 & 220 & 280 \\
\hline
$0.10$ & S\,\&\,P & 445 & 3 & 48 & 4 \\
       & S\,\&\,$\bar{\mathrm P}$ & 45 & 397 & 4 & 54 \\
       & $\bar{\mathrm S}$\,\&\,P & 0 & 0 & 422 & 78 \\
       & $\bar{\mathrm S}$\,\&\,$\bar{\mathrm P}$ & 0 & 0 & 170 & 330 \\
\hline
$0.20$ & S\,\&\,P & 445 & 3 & 48 & 4 \\
       & S\,\&\,$\bar{\mathrm P}$ & 3 & 439 & 0 & 58 \\
       & $\bar{\mathrm S}$\,\&\,P & 0 & 0 & 420 & 80 \\
       & $\bar{\mathrm S}$\,\&\,$\bar{\mathrm P}$ & 0 & 0 & 78 & 422 \\
\hline\hline
\end{tabular*}
\end{center}
\footnotesize
\renewcommand{\baselineskip}{11pt}
\textbf{Note:} Rows: designed cell; columns: cell assigned by the protocol. S: independence
support; P: persistent leading edge; a bar denotes absence. $T=250$; $500$ replications per row.
\end{table}

\subsection{Earlier design with Gaussian not-persistent witnesses}
An earlier version of the experiment instantiated the two not-persistent cells with Gaussian
shocks and $b=0$ ($\lambda=1.5$ for the dependent case) and used $\eta=0.02$. That design is
knife-edge: Gaussianity closes the distribution-based route, so ``not persistent'' there arises
from the unidentified ICA rotation flipping the sign across resamples rather than from the absence
of an edge. Table~\ref{tab:gauss} reports it for transparency; the main text uses the non-Gaussian
witnesses above.

\begin{table}[t]
\caption{\label{tab:gauss}Gaussian-witness design.}
\begin{center}
\small
\begin{tabular*}{\textwidth}{@{}l@{\extracolsep{\fill}}cccc@{}}
\hline\hline
Designed cell & S\,\&\,P & S\,\&\,$\bar{\mathrm P}$ & $\bar{\mathrm S}$\,\&\,P & $\bar{\mathrm S}$\,\&\,$\bar{\mathrm P}$ \\
\hline
S\,\&\,P ($t_3$, $\lambda=0$, $b=0.9$)                          & 448 & 1   & 48  & 3   \\
S\,\&\,$\bar{\mathrm P}$ (Gaussian, $\lambda=0$, $b=0$)          & 33  & 455 & 0   & 12  \\
$\bar{\mathrm S}$\,\&\,P ($t_3$, $\lambda=2$, $b=0.9$)            & 0   & 0   & 418 & 82  \\
$\bar{\mathrm S}$\,\&\,$\bar{\mathrm P}$ (Gaussian, $\lambda=1.5$, $b=0$) & 2   & 1   & 172 & 325 \\
\hline\hline
\end{tabular*}
\end{center}
\footnotesize
\renewcommand{\baselineskip}{11pt}
\textbf{Note:} $\eta=0.02$, $T=250$, $500$ replications per row. Support axis correctly
classified in $96.7\%$ and persistence axis in $85.4\%$ of replications.
\end{table}

\clearpage
\section{Additional TEP figures}
\label{app:tepfig}
Figure~\ref{fig:tepmi} shows the TEP mutual-information matrix among raw VAR residuals (left) and
recovered shocks (right) on a shared colour scale: the decomposition reduces remaining
dependence but does not eliminate it. Figure~\ref{fig:fullgraph} shows the full recovered TEP
VAR--LiNGAM graph before diagnostic filtering, combining the lagged VAR layer (solid) and the
instantaneous LiNGAM decomposition (dashed) for the $14$-variable system (seed-0 point fit). It records the
algorithmic recovery output, not structural evidence; its density is why the main text calibrates
structural interpretation by independence support and bootstrap persistence rather than by
graphical richness.

\begin{figure}[!htbp]
\centering
{\includegraphics[width=0.95\textwidth]{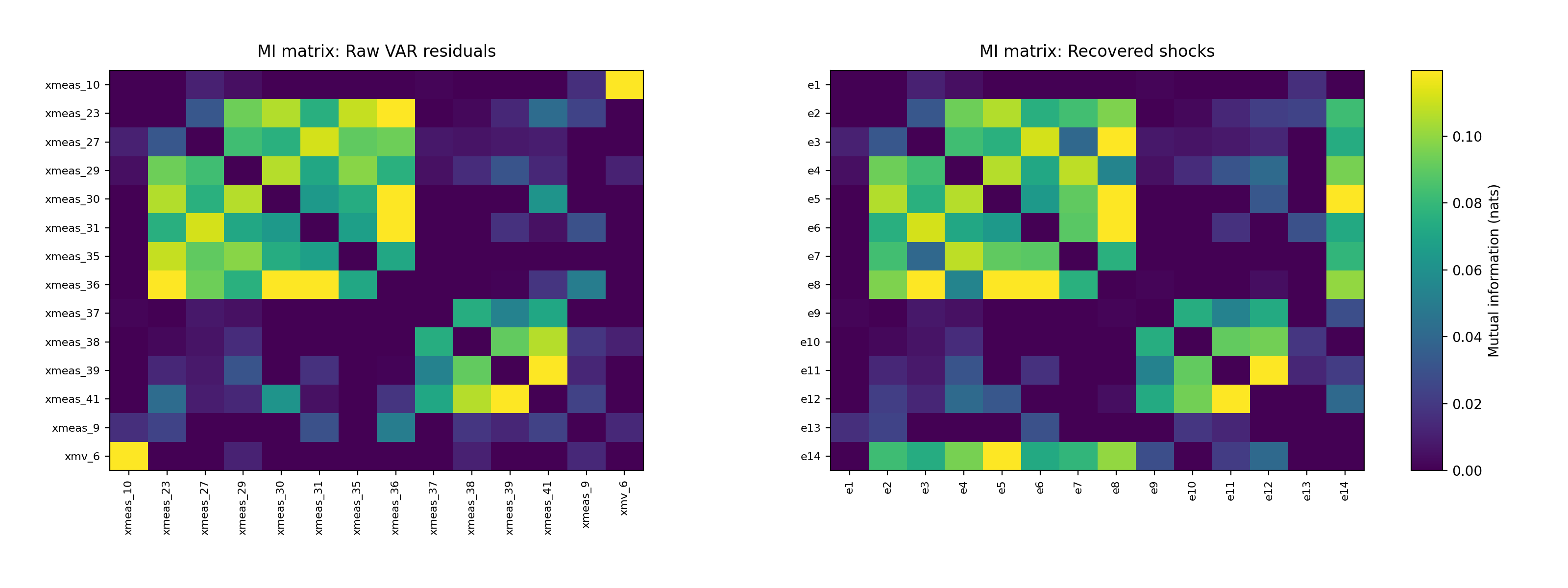}}
\caption{\label{fig:tepmi}TEP mutual information: raw VAR residuals and recovered shocks.}
\end{figure}

\begin{figure}[!htbp]
\centering
{\includegraphics[width=0.98\textwidth]{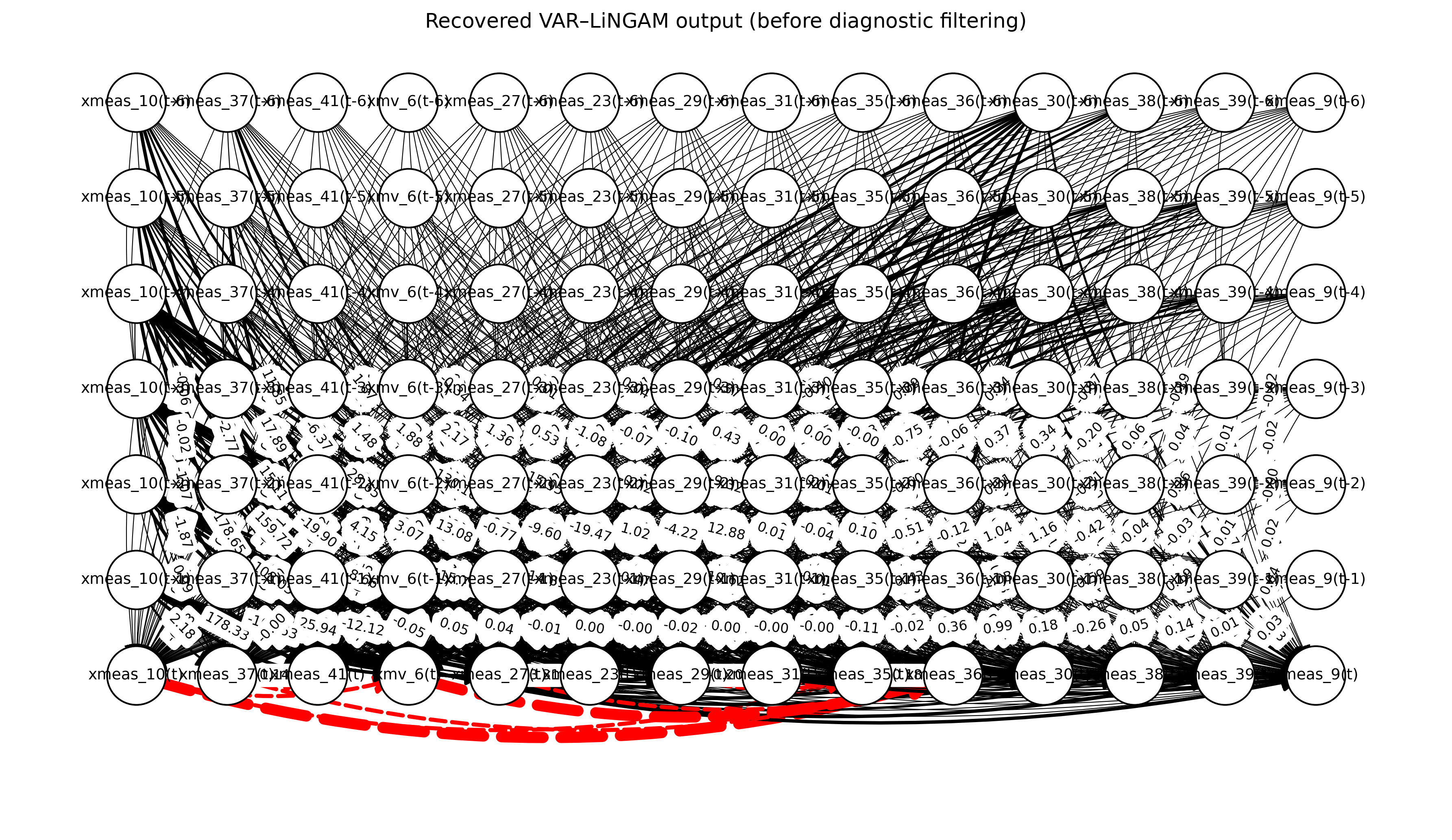}}
\caption{\label{fig:fullgraph}Full recovered TEP VAR--LiNGAM graph before diagnostic filtering.}
\end{figure}

\clearpage
\section{Compustat sample flow and common-sample lag comparison}
\label{app:sample}
Table~\ref{tab:flow} traces the admissibility screen of \MT{Section~5 of the main text}{Section~\ref{sec:data}} from the $11{,}945$ candidate
firm identifiers to the $60$ systems used in the baseline (VAR lag 2). The first two stages are
data-availability conditions; the remaining stages are the reduced-form and distributional
conditions of Layers~1 and~2, applied to every variable of a firm's four-variable system.

\begin{table}[!htbp]
\caption{\label{tab:flow}Compustat sample flow, baseline specification (VAR lag 2).}
\begin{center}
\small
\begin{tabular*}{\textwidth}{@{}p{0.62\textwidth}@{\extracolsep{\fill}}rr@{}}
\hline\hline
Stage & Firms excluded & Firms remaining \\
\hline
Candidate firm identifiers (SIC 2000--3999, 1970--2024) & & 11,945 \\
Fewer than 20 non-zero observations in a variable & 10,639 & 1,306 \\
Fewer than 31 observations after transformation & 749 & 557 \\
Residual autocorrelation (Ljung--Box) in any variable & 12 & 545 \\
Non-stationary residuals (ADF) in any variable & 244 & 301 \\
Non-stationary residuals (KPSS) in any variable & 48 & 253 \\
Gaussian residuals (D'Agostino--Pearson and Jarque--Bera) in any variable & 193 & 60 \\
\hline\hline
\end{tabular*}
\end{center}
\footnotesize
\renewcommand{\baselineskip}{11pt}
\textbf{Note:} Stages are applied sequentially; a firm is excluded at the first stage it fails. The
last four stages are evaluated on the $557$ firms for which a VAR(2) is fitted. The lag-1
specification, with one fewer observation lost to lags, retains $73$ firms.
\end{table}

Because the lag-1 and lag-2 specifications admit different firms ($73$ and $60$, with $33$ in
common), a comparison of their $2\times2$ tables confounds lag order with sample composition.
Table~\ref{tab:common33} therefore repeats the classification of \MT{Section~7.3 of the main text}{Section~\ref{sec:nonnest}} on the $33$ common
firms only.

\begin{table}[!htbp]
\caption{\label{tab:common33}Persistence and independence support on the $33$ firms admissible under both lag orders.}
\begin{center}
\begin{tabular*}{\textwidth}{@{}l@{\extracolsep{\fill}}cc@{}}
\hline\hline
 & VAR lag 1 & VAR lag 2 \\
\hline
Independence support (dHSIC $p>0.10$)                       & 7  & 14 \\
Persistent leading edge                                     & 14 & 12 \\
Support, not persistent                                     & 4  & 8  \\
Support, persistent                                         & 3  & 6  \\
No support, not persistent                                  & 15 & 13 \\
No support, persistent                                      & 11 & 6  \\
Median dHSIC $p$-value                                      & 0.010 & 0.072 \\
\hline
Same leading edge (ordered pair) across lag orders          & \multicolumn{2}{c}{21 of 33} \\
Same leading pair (either orientation)                      & \multicolumn{2}{c}{26 of 33} \\
Same support label across lag orders                        & \multicolumn{2}{c}{20 of 33} \\
Same persistence label across lag orders                    & \multicolumn{2}{c}{25 of 33} \\
Same cell of the $2\times2$ table                           & \multicolumn{2}{c}{16 of 33} \\
\hline\hline
\end{tabular*}
\end{center}
\footnotesize
\renewcommand{\baselineskip}{11pt}
\textbf{Note:} Same firms, same classifiers as \MT{Table~5 of the main text}{Table~\ref{tab:2x2}} ($\ell=4$, $499$
permutations; $500$ bootstrap resamples, $\eta=0.02$, stability $\ge0.70$, sign consistency
$\ge0.80$). Both off-diagonal cells are populated at both lag orders. Persistence labels are
largely stable across lag orders, whereas support is lower at lag 1 on the same firms, so the
difference in support between the full lag-1 and lag-2 samples cannot be attributed to sample
composition alone.
\end{table}

\clearpage
\section{Sensitivity of the TEP results to the near-collinear innovation pair}
\label{app:tepsens}
The $14$-variable TEP specification of \MT{Section~5 of the main text}{Section~\ref{sec:data}} contains a near-collinear pair of
innovations: the residual correlation between $\texttt{xmeas\_10}$ (purge rate) and $\texttt{xmv\_6}$ (purge
valve) at the selected VAR lag is $-0.999$, which the input redundancy filter, applied to the differenced
series, does not detect. Two one-variable ablations re-estimate the pipeline of the main text on the stored
processed input with $\texttt{xmv\_6}$ removed (A) or $\texttt{xmeas\_10}$ removed (B): VAR lag fixed at the
baseline's selected value $6$, ICA seed $0$, the same $500$ moving-block resample index sets, $999$ block
permutations for the joint dHSIC, $499$ for the pairwise statistics, and $q=0.95$ for the tails. The design,
thresholds and comparisons were fixed after the initial conditioning and VAR checks and before the
full-pipeline ablations. Re-selecting the lag by the same AIC rule returned lag $6$ in both systems. The
baseline itself was reproduced from the stored input before the ablations were run. Table~\ref{tab:tepsens}
reports the comparison.

\begin{table}[!htbp]
\caption{\label{tab:tepsens}TEP: baseline and the two one-variable ablations.}
\begin{center}
\small
\begin{tabular*}{\textwidth}{@{}l@{\extracolsep{\fill}}ccc@{}}
\hline\hline
 & \shortstack[c]{Baseline\\($14$ variables)} & \shortstack[c]{Ablation A\\($\texttt{xmv\_6}$ removed)} & \shortstack[c]{Ablation B\\($\texttt{xmeas\_10}$ removed)} \\
\hline
\multicolumn{4}{@{}l}{\textit{Panel A: conditioning of the VAR(6) residual correlation matrix}} \\
Smallest eigenvalue $\lambda_{\min}$ & 0.000706 & 0.6046 & 0.6046 \\
Condition number $\kappa_2$ & 2911 & 2.3 & 2.3 \\
Largest absolute residual correlation & 0.999 & 0.202 & 0.199 \\
\multicolumn{4}{@{}l}{\textit{Panel B: recovered instantaneous coefficients (point fit, ICA seed 0)}} \\
$\texttt{xmeas\_10}\to\texttt{xmeas\_36}$ & 3.311 & absent & -- \\
$\texttt{xmv\_6}\to\texttt{xmeas\_36}$ & 3.278 & -- & absent \\
$\texttt{xmeas\_10}\to\texttt{xmv\_6}$ & $-0.999$ & -- & -- \\
$\texttt{xmeas\_29}\to\texttt{xmeas\_35}$ & $-0.201$ & $-0.189$ & $-0.205$ \\
$\texttt{xmeas\_27}\to\texttt{xmeas\_36}$ & $-0.197$ & $-0.210$ & $-0.207$ \\
$\texttt{xmeas\_35}\to\texttt{xmeas\_36}$ & $0.177$ & $0.183$ & $0.182$ \\
Largest absolute coefficient & 3.311 & 0.210 & 0.207 \\
Number of recovered edges & 10 & 6 & 7 \\
\multicolumn{4}{@{}l}{\textit{Panel C: bootstrap stability of the three edges that pass the persistence rule}} \\
$\texttt{xmeas\_29}\to\texttt{xmeas\_35}$ & 0.796 & 0.752 & 0.760 \\
$\texttt{xmeas\_27}\to\texttt{xmeas\_36}$ & 0.770 & 0.786 & 0.754 \\
$\texttt{xmeas\_35}\to\texttt{xmeas\_36}$ & 0.722 & 0.746 & 0.744 \\
\multicolumn{4}{@{}l}{\textit{Panel D: independence diagnostics on the recovered shocks}} \\
Joint dHSIC statistic & 0.00490 & 0.00518 & 0.00523 \\
Exceedances of $999$ permutations ($p$) & 0 (0.001) & 0 (0.001) & 0 (0.001) \\
Mean off-diagonal recovered-shock MI & 0.037 & 0.037 & 0.038 \\
Pairwise rejections at $10\%$ & 33 of 91 & 30 of 78 & 30 of 78 \\
Max tail $\widehat\chi$ (upper/lower) & 0.243/0.284 & 0.243/0.325 & 0.243/0.325 \\
Block-to-free null SD ratio & 29.53 & 24.67 & 18.55 \\
\hline\hline
\end{tabular*}
\end{center}
\footnotesize
\renewcommand{\baselineskip}{11pt}
\textbf{Note:} ``--'' = variable not in the system; ``absent'' = edge not retained in the point fit. The sign
consistency of the three edges in Panel~C is $1.00$ in every system; the paired Monte Carlo standard errors of
the stability differences between an ablation and the baseline, computed over the common resample index sets, are
$0.016$--$0.021$. Pairwise rejections are counted on the $78$ pairs among each system's $13$ variables; on
those same pairs the baseline rejects $26$ (the pairs of A) and $33$ (the pairs of B). In B the series
held fixed in the block permutation is $\texttt{xmeas\_23}$ rather than $\texttt{xmeas\_10}$.
\end{table}

Across ICA seeds $0$--$4$ (point fits at lag $6$; persistence and independence diagnostics were not recomputed for
each initialisation), the $14$-variable specification keeps $\texttt{xmeas\_10}\to\texttt{xmeas\_36}$ as the
leading directed edge in all five fits, but its coefficient ranges from $3.31$ to $4.31$, the
edge set at $|\hat b|\ge0.02$ overlaps with the seed-0 set by a Jaccard index of $0.21$--$0.90$, and one
initialisation returns two further coefficients above $2.7$ into $\texttt{xmeas\_29}$ together with the two
edges $\texttt{xmeas\_29}\to\texttt{xmeas\_35}$ and $\texttt{xmeas\_27}\to\texttt{xmeas\_36}$ in the opposite
orientation. In A and B, the largest absolute coefficient remains close to $0.21$ across the five fits. The
leading edge is $\texttt{xmeas\_27}\to\texttt{xmeas\_36}$ in four of them and $\texttt{xmeas\_29}\to\texttt{xmeas\_35}$,
of near-equal magnitude, in the fifth, and $\texttt{xmeas\_35}\to\texttt{xmeas\_36}$ appears with the opposite
orientation in one fit; four edges are present at $|\hat b|\ge0.02$ in all five fits of both systems. Seed $0$
remains the specification reported in the main text.

\end{document}